\documentclass[aps,twocolumn,showpacs,pra,citeautoscript,amsmath,amssymb,floatfix,superscriptaddress]{revtex4-1}

\usepackage{amsmath}
\usepackage{amssymb}
\usepackage{epsfig}
\usepackage{color}
\usepackage{amsthm}
\usepackage{hyperref}
\usepackage{appendix}

\newtheorem{theorem}{Theorem}
\newtheorem*{theorem*}{Theorem}

\newtheorem{lemma}{Lemma}

\newtheorem{definition}{Definition}
\newtheorem{claim}{Claim}

\newcommand{\ket}[1]{|#1\rangle}

\newcommand{\bracket}[2]{\langle #1|#2\rangle}
\newcommand{\ketbra}[2]{|#1\rangle\langle #2|}
\newcommand{\tr}[0]{\textnormal{Tr}}

\begin{document}
\title{Communication tasks with infinite quantum-classical separation}
\author{Christopher Perry}
\affiliation{Department of Physics and Astronomy, University College London, Gower Street, London WC1E 6BT, United Kingdom}
\author{Rahul Jain}
\affiliation{Department of Computer Science and Centre for Quantum Technologies, National University of Singapore, Singapore 119615}
\affiliation{MajuLab, CNRS-UNS-NUS-NTU International Joint Research Unit, UMI 3654, Singapore}
\author{Jonathan Oppenheim}
\affiliation{Department of Physics and Astronomy, University College London, Gower Street, London WC1E 6BT, United Kingdom}
\affiliation{Department of Computer Science and Centre for Quantum Technologies, National University of Singapore, Singapore 119615}

\begin{abstract}
Quantum resources can be more powerful than classical resources -- a quantum computer can solve certain problems exponentially faster than a classical computer, and computing a function of two people's inputs can be done with exponentially less communication with quantum messages than with classical ones. Here we consider a task between two players, Alice and Bob where quantum resources are infinitely more powerful than their classical counterpart. Alice is given a string of length $n$, and Bob's task is to exclude certain combinations of bits that Alice might have. If Alice must send classical messages, then she must reveal nearly $n$ bits of information to Bob, but if she is allowed to send quantum bits, the amount of information she must reveal goes to zero with increasing $n$. Next, we consider a version of the task where the parties may have access to entanglement. With this assistance, Alice only needs to send a constant number of bits, while without entanglement, the number of bits Alice must send grows linearly with $n$. The task is related to the PBR theorem which arises in the context of the foundations of quantum theory.
\end{abstract}

\maketitle

\emph{Introduction.} In a typical communication task, two players, Alice and Bob, are given inputs $x$ and $y$ and asked to compute some function or relation, $f\left(x,y\right)$. As initially neither player has any knowledge of the other's input, some communication will have to take place between the parties to achieve their goal. Depending on the resources available to them, this communication may involve sending quantum states or perhaps be restricted to sending classical messages. How much of an advantage can be gained in using quantum strategies over classical ones? The standard measure used to investigate this question is the communication complexity \cite{Yao1979}, the minimum amount of bits or qubits the players must exchange to succeed. Tasks exist for which there is an exponential separation between the quantum and classical communication complexities \cite{Raz1999, Buhrman2001, DeWolf2003} and in the absence of shared entanglement, it is known that such a separation is maximal in the bounded error model \cite{Kremer1995}.

Here we consider two modified scenarios and ask how big the separation can be. Firstly, rather than analyzing how much communication is needed in a given task, we look at the amount of information regarding the players' inputs that needs to be exchanged. For our task we find that in the zero error setting, it is possible to have an infinite separation with respect to this measure: classically nearly all of the information needs to be revealed while a quantum strategy can succeed yet reveal next to nothing. This result has clear implications if one is concerned about keeping such information private. If we instead want an infinite separation in the number of sent bits, rather than the amount of sent information, we are able to do so by allowing the players to abort some fraction of the games they play. Here an entanglement assisted strategy has constant communication complexity, while the purely classical complexity is $\Omega(n)$.

The game we consider, and shall refer to as the exclusion game, involves Alice and Bob, together with a referee to mediate the task. It runs as follows. First, the referee gives Alice an $n$-bit string, $\vec{x}\in\{0,1\}^{n}$, with each of the $2^n$ strings being equally likely. Alice is then allowed to send a single message regarding her input to Bob. Next, the referee chooses at random a subset, $y\subseteq[n]$ of size $m$, of locations in Alice's bit string and gives this to Bob. There are ${n \choose m}$ possible subsets and they are all equally likely. If $\mathcal{M}_y\left(\vec{x}\right)$ denotes the $m$-bit string formed by restricting $\vec{x}$ to the bits specified by $y$, Bob's task is to produce a string $\vec{z}_y\in\{0,1\}^{m}$ such that $\mathcal{M}_y\left(\vec{x}\right)\neq\vec{z}_y$.

As an illustration, consider a game where $n=3$, $m=2$ and the inputs given to Alice and Bob are $\vec{x}=001$ and $y=\{1,3\}$ respectively. Winning answers that Bob can give would then be $\vec{z}_y\in\{00,10,11\}$ as the only losing answer is $\vec{z}_y=\mathcal{M}_y\left(\vec{x}\right)=\mathcal{M}_{\{1,3\}}\left(001\right)=01$.

More formally, the amount of information that the players reveal to one another about their inputs is called the internal information cost of the protocol \footnote{For our game, because the players' inputs come from a product distribution, the internal information cost is equal to the external information cost, the amount of information the players reveal to an external observer.}. The information cost is a useful quantity as it lower bounds the protocol's communication complexity \cite{Chakrabarti2001, Bar2002}. In classical information theory, it has found use in proving direct sum theorems \cite{Chakrabarti2001, Bar2002, Jain2003, Barak2013} and while for quantum protocols involving multiple rounds there have been many definitions (see for example \cite{Jain2003b, Jain2010, Braverman2012} and in particular \cite{Touchette2014} for a recent, fully quantum generalization of the classical case), it is relatively simple to define for single round schemes. If one wants to reveal as little information as possible, it is natural to ask if an advantage can be gained in using quantum protocols and there are known exponential separations \cite{Kerenidis2012}. We find that in the exclusion game, for certain choices of $m$, classical strategies must reveal greater than $n-o(n)$ bits of information. Quantum mechanics however, admits a strategy for which the information cost tends to zero in the limit of large $n$.

For our second result, we consider how the communication complexity changes when the players are allowed to share entanglement. This scenario was originally formulated in \cite{Cleve1997} (and developed in \cite{Buhrman1999, Buhrman2001b}) where a task was found for which sharing an entangled state reduces the communication complexity by a single bit. Exponential separations between what is possible with entanglement assisted and classical strategies have also been found \cite{Gavinsky2009, Gavinsky2009b} but in general, it is known that almost all Boolean functions have linear communication complexity even in the presence of shared entanglement \cite{Buhrman2001c,Gavinsky2006,Montanaro2007}. For a recent survey, see \cite{Buhrman2010}. By modifying the exclusion game to allow Alice to decline to play with probability $\delta$, it is possible to find an entanglement assisted scheme for which the communication complexity is less than a constant for particular $m$. For purely classical strategies, the communication complexity is $\Omega(n)$.

Previous unbounded separations for communication tasks exist in the non-deterministic setting. Here, for a Boolean function, two parties are required to compute $f$ correctly with certainty if $f\left(x,y\right)=0$ and non-zero probability if $f\left(x,y\right)=1$. In this regime, a 1 qubit vs. $\log\left(n\right)$ bits separation has been found for the communication complexity \cite{Massar2001} and a 1 vs. $n$ gap exists for the query complexity \cite{DeWolf2003}. Furthermore, \cite{Massar2001} uses this separation to show that unbounded classical communication is needed to simulate bipartite measurements on a Bell state if the parties share only a finite amount of randomness. In a similar vein, it was shown in \cite{Galvao2003} that there exist scenarios where a qubit can be substituted only for an unbounded number of classical bits.


This paper is organized as follows. First, we investigate the amount of information revealed, giving a quantum strategy and lower bounding the classical information cost. Next we consider the case where Alice is allowed to abort the game. We lower bound the classical communication complexity and give a quantum strategy making use of shared entangled states. Appendices can be found in the supplementary material. Throughout this paper standard notation for asymptotic complexity ($O,o,\Omega,\omega$) is used. Formal definitions can be found in, for example, \cite{Knuth1976}.   

\emph{Information revealed.} More formally, how do we quantify the amount of information revealed in a given task? Let $X$ and $Y$ denote the random variables, distributed according to some joint distribution $\mu$, received by Alice and Bob respectively. Let $\pi$ be the protocol they follow in attempting to achieve their goal and $\pi\left(X,Y\right)$ denote the public randomness and messages exchanged during the protocol. The internal information cost of the protocol is then given by \cite{Barak2013}:
\begin{equation}
IC_{\mu}\left(\pi\right)=I\left(X:\pi\left(X,Y\right)|Y\right)+I\left(Y:\pi\left(X,Y\right)|X\right), \label{Inf Cost}
\end{equation}
where $I(R:T|U)$ denotes the mutual information between $R$ and $T$ given knowledge of $U$. In terms of the Shannon entropy $H(R)$, $I(R:T|U):=H(R,U)+H(T,U)-H(R,T,U)-H(U)$. Intuitively, the first term in Eq. (\ref{Inf Cost}) captures the amount of information Bob gains about Alice's input, $X$, by following the protocol, $\pi$. Conditioning on $Y$ accounts for any correlations that exist between $X$ and $Y$. The second term reverses the roles of Alice and Bob.

In the exclusion task focused on here, Alice and Bob's inputs are uniform and independent of one another. Furthermore, a protocol consists of sending a single message from Alice to Bob. This simplifies Eq. (\ref{Inf Cost}) so that for a classical message, $M_{C}$:
\begin{align}
IC_{\textit{unif}}\left(M_{C}\right)&=n-H(X|M_{C}), \label{Clas Inf Cost}
\end{align}
where $H(T|U)$ is the conditional Shannon entropy. For general $\mu$, when the message is quantum, denoted $M_{Q}$:
\begin{align}
IC_{\mu}\left(M_{Q}\right)&\leq2S(M_{Q}),\label{Quan Inf Cost}
\end{align}
where $S(R)$ is the von Neumann entropy.

To devise a quantum strategy, consider the measurement used by Pusey, Barrett and Rudolph (PBR) in the context of investigating the reality of the quantum state \cite{Pusey2012}. The measurement in question applies to the following scenario. Suppose $r$ systems are each prepared in one of two states:
\begin{align}
\begin{split}
\ket{\psi_0\left(\theta\right)}&=\cos\left(\frac{\theta}{2}\right)\ket{0}+\sin\left(\frac{\theta}{2}\right)\ket{1}, \\
\ket{\psi_1\left(\theta\right)}&=\cos\left(\frac{\theta}{2}\right)\ket{0}-\sin\left(\frac{\theta}{2}\right)\ket{1},
\end{split}
\end{align}
so that in total there are $2^r$ possible preparations:
\begin{equation}
\mathcal{P}=\left\{\ket{\Psi_{\vec{x}}\left(\theta\right)}=\bigotimes^{r}_{i=1}\ket{\psi_{x_i}\left(\theta\right)}\right\}_{\vec{x}\in\{0,1\}^r}.
\end{equation}
PBR noted that if $\theta$ is chosen to be:
\begin{equation}
\theta_r=2\arctan\left(2^{1/r}-1\right), \label{PBR angle}
\end{equation}
it is possible to perform a global measurement across the $r$ systems such that the outcome enables one to deduce a preparation that has not taken place. In other words, if the global preparation resulted in $\ket{\Psi_{\vec{x}}}$, after the measurement it is possible to produce a $\vec{z}$ such that $\vec{z}\neq\vec{x}$ with certainty.

This is the smallest value of $\theta$ for which such a measurement is possible \cite{Bandyopadhyay2014} and the measurement to perform in this case is given by the set of projectors, $\mathcal{M}=\{\ket{\zeta_{\vec{z}}}\}_{\vec{z}\in\{0,1\}^r}$, where:
\begin{equation}
\ket{\zeta_{\vec{z}}}=\frac{1}{\sqrt{2^r}}\left(\ket{\vec{0}}-\sum_{\vec{s}\neq\vec{0}}\left(-1\right)^{\vec{z}\cdot\vec{s}}\ket{\vec{s}}\right). \label{PBR Projector}
\end{equation}

Converting these results into a quantum strategy for playing the exclusion game leads to the following:
\begin{theorem} \label{Quantum IR}
Suppose $m\in\omega\left(n^{\frac{1}{2}+\beta}\right), \beta>0$. Then there exists a quantum strategy for the exclusion game (for all prior distributions on $\vec{x}$ and $y$) such that Bob is able to produce $\vec{z}_y\neq\mathcal{M}_y\left(\vec{x}\right)$, for any $y$, while the amount of information Alice reveals to Bob regarding $\vec{x}$ tends to zero in the limit of large $n$.
\end{theorem}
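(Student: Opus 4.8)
\emph{Proof proposal.} The plan is for Alice to send a PBR-style encoding of her entire string and for Bob to run the PBR measurement only on the coordinates he is later asked about. Fix $\theta_m=2\arctan(2^{1/m}-1)$, the PBR angle of Eq.~(\ref{PBR angle}) for $r=m$ systems, and let Alice, on input $\vec{x}$, send the $n$-qubit product state $\ket{\Psi_{\vec{x}}(\theta_m)}=\bigotimes_{i=1}^{n}\ket{\psi_{x_i}(\theta_m)}$. On receiving $y$, Bob discards the qubits at positions outside $y$; since the message is a product state, the $m$ qubits he keeps are in the pure state $\ket{\Psi_{\mathcal{M}_y(\vec{x})}(\theta_m)}$, i.e.\ exactly an $m$-system PBR preparation at the PBR angle for $r=m$. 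He measures them with the projectors $\{\ket{\zeta_{\vec{z}}}\}_{\vec{z}\in\{0,1\}^{m}}$ of Eq.~(\ref{PBR Projector}) and outputs the outcome $\vec{z}_y$. Correctness is then immediate from the PBR property recalled above: the outcome labels a preparation that did not occur, so $\vec{z}_y\neq\mathcal{M}_y(\vec{x})$ with certainty, for every $\vec{x}$ and every $y$, hence for every prior on $\vec{x}$ and $y$.

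It remains to bound the information Alice reveals. Since her message depends only on $\vec{x}$ and uses no public randomness, Eq.~(\ref{Quan Inf Cost}) gives $IC_{\mu}(M_Q)\le 2S(M_Q)$, where $S(M_Q)$ is the von Neumann entropy of the averaged message state $\bar\rho=\sum_{\vec{x}}\mu(\vec{x})\ketbra{\Psi_{\vec{x}}(\theta_m)}{\Psi_{\vec{x}}(\theta_m)}$. I would control this uniformly over all priors using subadditivity, $S(\bar\rho)\le\sum_{i=1}^{n}S(\rho_i)$, where the $i$-th single-qubit marginal $\rho_i$ is a mixture $p\ketbra{\psi_0(\theta_m)}{\psi_0(\theta_m)}+(1-p)\ketbra{\psi_1(\theta_m)}{\psi_1(\theta_m)}$ for some $p\in[0,1]$. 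Its characteristic polynomial has smaller root $\tfrac12\bigl(1-\sqrt{1-4p(1-p)\sin^{2}\theta_m}\bigr)\le\sin^{2}(\theta_m/2)$, since $4p(1-p)\le1$ and $\theta_m\le\pi/2$; because $\sin^{2}(\theta_m/2)\le\tfrac12$ and binary entropy $h$ is increasing there, $S(\rho_i)\le h\!\left(\sin^{2}(\theta_m/2)\right)$, and hence $S(M_Q)\le n\,h\!\left(\sin^{2}(\theta_m/2)\right)$ for every $\mu$.

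The final step is asymptotic. Expanding Eq.~(\ref{PBR angle}) gives $\sin^{2}(\theta_m/2)=\Theta(1/m^{2})$ as $m\to\infty$, hence $h\!\left(\sin^{2}(\theta_m/2)\right)=\Theta\!\left(m^{-2}\log m\right)$ and $IC_{\mu}(M_Q)=O\!\left(n\,m^{-2}\log m\right)$. Using $m\le n$ together with $m\in\omega(n^{1/2+\beta})$ then yields $n\,m^{-2}\log m\le n\,m^{-2}\log n=o\!\left(n^{-2\beta}\log n\right)\to0$ since $\beta>0$, which completes the argument. The quantum strategy and its correctness are essentially just the PBR construction; the real work is the prior-independent bound on $S(M_Q)$ via subadditivity and the eigenvalue estimate, and then recognising that $m\in\omega(n^{1/2+\beta})$ is exactly the regime in which the $\Theta(1/m^{2})$ entropy carried per qubit beats the $n$ qubits being sent -- pinning down this square-root threshold (up to the logarithm) is the main thing to get right.
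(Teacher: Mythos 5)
Your proposal is correct and follows essentially the same route as the paper's proof: the same PBR encoding and measurement on the coordinates in $y$, the same bound $IC_{\mu}(M_Q)\leq 2S(M_Q)$ combined with subadditivity to reduce to single-qubit entropies of order $h\bigl(\sin^{2}(\theta_m/2)\bigr)=\Theta(m^{-2}\log m)$, and the same $n\,m^{-2}\log m\to 0$ asymptotics for $m\in\omega(n^{1/2+\beta})$. Your explicit eigenvalue bound for an arbitrary mixing probability $p$ is a nice touch that makes the prior-independence fully explicit, but it is a refinement of, not a departure from, the paper's argument (which takes the equal mixture as the worst case).
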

\begin{proof}
The full proof is given in Appendix B. Suppose that upon receiving the bit string $\vec{x}$ from the referee, Alice prepares the state $\ket{\Psi_{\vec{x}}\left(\theta_m\right)}$ where $\theta_m$ is defined by Eq. (\ref{PBR angle}). She sends this state to Bob. The referee then gives input $y$ to Bob who takes the systems in Alice's message identified by $y$ and performs the measurement described in Eq. (\ref{PBR Projector}). This allows him to produce a $\vec{z}_y$ such that $\vec{z}_y\neq\mathcal{M}_y\left(\vec{x}\right)$ with certainty. Hence Alice and Bob succeed in their task.

To upper bound the amount of information this strategy reveals, by Eq. (\ref{Quan Inf Cost}) it suffices to consider the entropy of the message sent by Alice. That this tends to zero for the $m$ specified, is shown in Appendix B. Essentially, as $n$ increases, the angle between $\ket{\psi_0}$ and $\ket{\psi_1}$ can be made smaller while still allowing exclusion to be possible.
\end{proof}

How much information must Alice reveal to Bob in a classical strategy? For him to succeed with certainty, the message that Alice sends needs to allow him to produce a set of answers, $A_{\vec{x}}=\{\vec{z}_y\}$ such that $\vec{z}_y\neq\mathcal{M}_y\left(\vec{x}\right)$ for each possible $y$. Each of the ${n \choose m}$ elements of $A_{\vec{x}}$ allows Bob to deduce a set, $S_{{\vec{z}}_y}$, of $2^{n-m}$ strings not equal to $\vec{x}$. $S_{{\vec{z}}_y}$ consists of all $\vec{x}$ such that $\mathcal{M}_y\left(\vec{x}\right)=\vec{z}_y$. Hence each $\vec{z}_y\in A_{\vec{x}}$ reveals some information about $\vec{x}$ to Bob, although there may be some overlap between the elements in different $S_{\vec{z}_y}$. To lower bound the amount of information that is revealed we need to find the $A_{\vec{x}}$ that allows Bob to exclude the fewest possible candidates for $\vec{x}$.

Doing so leads to the following result:
\begin{theorem} \label{Classical IR}
If $\vec{x}$ and $y$ are chosen independently and from the uniform distribution, $\mu=\textit{unif}$:
\begin{enumerate}
\item Any classical strategy, $M_C$, for the exclusion game such that Bob is able to produce $\vec{z}_y\neq\mathcal{M}_y\left(\vec{x}\right)$, for any $y$, is such that:
\begin{equation}
\textit{IC}_{\textit{unif}}\left(M_C\right)\geq n - \log_{2} \left(\gamma_m\right), \label{Classical IL general m}
\end{equation}
where $\gamma_m=\sum_{i=0}^{m-1}{n\choose i}$.
\item For the following parameterizations of $m$ we find:
\begin{enumerate}
\item If both $m\in\omega\left(\sqrt{n}\right)$ and $m\in o(n)$ hold, then $\textit{IC}_{\textit{unif}}\left(M_C\right)\geq n-o(n)$.
\item If $m=\alpha n$ for some constant $\alpha$, $0<\alpha<\frac{1}{2}$, then $\textit{IC}_{\textit{unif}}\left(M_C\right)\in\Omega\left(n\right)$.
\end{enumerate}
\end{enumerate}
\end{theorem}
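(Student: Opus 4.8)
The plan is to start from Eq.~(\ref{Clas Inf Cost}), so that proving part~1 amounts to showing $H(X\mid M_C)\le\log_2\gamma_m$. Absorbing any public randomness by conditioning on it, it is enough to bound $H(X\mid M_C=m_c)$ for each value $m_c$ of Alice's message. Let $T_{m_c}\subseteq\{0,1\}^n$ be the set of inputs $\vec x$ that Alice sends to $m_c$ with positive probability; the posterior of $X$ given $M_C=m_c$ is supported on $T_{m_c}$, so $H(X\mid M_C=m_c)\le\log_2|T_{m_c}|$, and hence $H(X\mid M_C)\le\log_2\big(\max_{m_c}|T_{m_c}|\big)$.

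The combinatorial heart of the argument is the bound $|T_{m_c}|\le\gamma_m$. Since Bob must win with certainty, for every $m$-subset $y$ of coordinates he has to output a string $\vec z_y\in\{0,1\}^m$ with $\mathcal M_y(\vec x)\ne\vec z_y$ for all $\vec x\in T_{m_c}$; equivalently, the projection of $T_{m_c}$ onto the coordinates of $y$ is a \emph{proper} subset of $\{0,1\}^m$. Thus $T_{m_c}$ shatters no $m$-element set of coordinates, i.e.\ it has VC dimension at most $m-1$, and the Sauer--Shelah lemma gives $|T_{m_c}|\le\sum_{i=0}^{m-1}\binom{n}{i}=\gamma_m$. (If a self-contained proof is preferred, the same bound follows from the standard down-shifting proof of Sauer--Shelah, or from the polynomial method.) Putting this together with Eq.~(\ref{Clas Inf Cost}) yields $\textit{IC}_{\textit{unif}}(M_C)=n-H(X\mid M_C)\ge n-\log_2\gamma_m$, which is Eq.~(\ref{Classical IL general m}).

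For part~2 I would substitute standard estimates of $\gamma_m$ into $n-\log_2\gamma_m$. For (a), when $m\in o(n)$ the inequalities $\gamma_m\le\sum_{i=0}^{m}\binom{n}{i}\le(en/m)^m$ give $\log_2\gamma_m\le m\log_2(en/m)$; writing $m=tn$ with $t\to0$ this equals $n\big(t\log_2(1/t)+t\log_2 e\big)$, which is $o(n)$ since $t\log_2(1/t)\to0$, so $\textit{IC}_{\textit{unif}}(M_C)\ge n-o(n)$. (Only $m\in o(n)$ is used for the bound; the hypothesis $m\in\omega(\sqrt n)$ is what makes this result complement the quantum strategy of Theorem~\ref{Quantum IR}.) For (b), when $m=\alpha n$ with $0<\alpha<\tfrac12$ the entropy estimate $\sum_{i=0}^{\alpha n}\binom{n}{i}\le 2^{n h(\alpha)}$, with $h(\alpha)=-\alpha\log_2\alpha-(1-\alpha)\log_2(1-\alpha)$, gives $\log_2\gamma_m\le n\,h(\alpha)$, hence $\textit{IC}_{\textit{unif}}(M_C)\ge\big(1-h(\alpha)\big)n\in\Omega(n)$ because $h(\alpha)<1$ for $\alpha<\tfrac12$.

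The step I expect to be the main obstacle is the combinatorial one: recognising that the zero-error exclusion requirement, imposed simultaneously over all $\binom{n}{m}$ subsets $y$, is exactly a non-shattering (bounded-VC-dimension) condition on the set of inputs consistent with a given message, so that Sauer--Shelah applies and is in fact tight. Everything else --- translating the support bound into a bound on conditional entropy, and the asymptotic estimates of the partial binomial sum $\gamma_m$ --- is routine, the only other care point being to handle public (and private) randomness so that the per-message argument is valid for arbitrary classical one-way protocols rather than just deterministic ones.
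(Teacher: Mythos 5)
Your proof is correct, and its key combinatorial step takes a genuinely different route from the paper's. The paper bounds the support of $X$ given a message by analysing Bob's answer set directly: it argues, via an explicit overlap/inclusion--exclusion counting over the $\binom{n}{m}$ answers, that the number of excluded strings is minimised when all answers are mutually consistent, i.e.\ $\vec z_y=\mathcal{M}_y(\vec a)$ for a single string $\vec a$, and then counts the unexcluded strings (those with fewer than $m$ zeros when $\vec a=\vec 0$) to get $\gamma_m$. You instead observe that zero-error exclusion on every $m$-subset $y$ means the set $T_{m_c}$ of inputs consistent with a given message never projects onto all of $\{0,1\}^m$, hence shatters no $m$-set, has VC dimension at most $m-1$, and Sauer--Shelah immediately gives $|T_{m_c}|\le\sum_{i=0}^{m-1}\binom{n}{i}=\gamma_m$; the entropy step $H(X\mid M_C)\le\log_2\max_{m_c}|T_{m_c}|$ and Eq.~(\ref{Clas Inf Cost}) then give Eq.~(\ref{Classical IL general m}) exactly as in the paper. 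Your route is shorter and arguably more rigorous than the paper's extremal-answer-set claim (whose overlap-counting argument is somewhat informal), and it cleanly handles randomised Bob since any answer in his zero-error support witnesses the missing projection point; what the paper's argument buys in exchange is an explicit identification of the extremal structure (consistent answers from one string), which also exhibits tightness of the bound --- something you note in passing via the Hamming-ball example. For Part 2 the two proofs essentially coincide: the paper uses the entropy bound $\sum_{i\le qn}\binom{n}{i}\le 2^{nH(q)}$ in both cases (taking $m=n^{1-\epsilon}$ in case (a)), while you use $(en/m)^m$ for case (a), which covers all $m\in o(n)$ directly; your observation that only $m\in o(n)$ is needed there, with $m\in\omega(\sqrt n)$ serving only to match the quantum upper bound of Theorem~\ref{Quantum IR}, is accurate.
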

\begin{proof}
The full proof is given in Appendix C. First we show that a set of answers that allows Bob to exclude the fewest possible $\vec{x}$ is of the form $A_{\vec{x}}=\{\vec{z}_y: \vec{z}_y=\mathcal{M}_y\left(\vec{a}_{\vec{x}}\right)\}$ where $\vec{a}_{\vec{x}}\in\{0,1\}^n$ is some suitably chosen bit string such that $\mathcal{M}_{y}\left(\vec{a}_{\vec{x}}\right)\neq\mathcal{M}_{y}\left(\vec{x}\right)$, $\forall y$. Without loss of generality, to calculate the number of strings such a $A_{\vec{x}}$ will exclude, we can assume $\vec{a}_{\vec{x}}$ to be the all zero string, $\vec{0}$. Hence the $\vec{x}$ that Bob can exclude are precisely those containing $m$ or more zeros. The number of remaining possibilities is given by $\gamma_m=\sum_{i=0}^{m-1} {n \choose i}$ and to lower bound the amount of information revealed, it is sufficient to assume that Bob believes that they are all equally likely. Using this fact and Eq. (\ref{Clas Inf Cost}) gives Eq. (\ref{Classical IL general m}).

Part 2 follows by considering the scaling of $\gamma_m$ for the stated $m$. This is done in Appendix C.
\end{proof}

From Theorem \ref{Quantum IR} and Theorem \ref{Classical IR} Part 2a, we obtain our first infinite separation between quantum and classical mechanics. For the exclusion game, there exists a quantum strategy such that for certain choices of $m$, the amount of information Alice must reveal to Bob tends to 0 in the limit of large $n$. On the other hand, for the same scaling of $m$ all classical strategies must reveal nearly $n$ bits of information about $\vec{x}$ to Bob. Quantum mechanics allows Alice to reveal almost nothing about her input while classically she must reveal close to everything.

In the discussion so far, we have demanded that Alice and Bob's strategy should allow Bob to always output a winning string. What impact does allowing Bob to make an error with probability at most $\varepsilon$ have? The scaling given in Theorem \ref{Classical IR} Part 2 is not robust against allowing such an error. To see this, suppose that Alice sends no information to Bob and upon receiving input $y$ from the referee he is forced to guess an answer. There are $2^m$ possible strings he can give and of these only one, that which is equal to $\mathcal{M}_y\left(\vec{x}\right)$, is incorrect. Hence, for $\varepsilon\geq\frac{1}{2^m}$, Alice does not need to send a message to Bob and thus reveals no information regarding $\vec{x}$. 


\emph{Entanglement assisted communication complexity.} It should also be noted that the quantum strategy given requires exactly $n$ qubits to be sent from Alice to Bob while, as the information cost lower bounds the communication cost, an optimal classical strategy may require $n-o(n)$ bits. By modifying the game, we obtain a task which admits a strategy involving entanglement with constant communication complexity while all classical strategies involve at least $\Omega(n)$ bits being sent. In what follows, Alice may choose to abort the game with probability $\delta$ on each pair of inputs, $(\vec{x},y)$, and the players have access to both private and shared randomness. When she does not abort however, Bob must give a correct answer.

How does this change affect the classical communication complexity?
\begin{theorem} \label{LV CC Theorem}
Suppose $m=\alpha n$, $0<\alpha<\frac{1}{2}$ and Alice can abort with probability at most $\delta>0$ on each pair of inputs, $(\vec{x},y)$. Any classical strategy for the exclusion game such that when Alice does not abort, Bob is able to produce $\vec{z}_y\neq\mathcal{M}_y\left(\vec{x}\right)$ for any $y$, has communication cost $\Omega(n)$.
\end{theorem}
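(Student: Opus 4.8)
The plan is to reduce the statement to a counting bound of the same kind that underlies Theorem~\ref{Classical IR}, with the allowed abort probability contributing only a lower-order correction. The crucial observation is that the protocol is one-way: Alice never learns anything about $y$, so her decision whether to abort is a function of $\vec{x}$ and the (private and shared) randomness only, not of $y$. Hence, on any run in which Alice does not abort and sends a message $\mu$, Bob is required to answer correctly for \emph{every} $y$, so the true input $\vec{x}$ must lie in the set $\mathrm{Consistent}(A_\mu)=\{\vec{w}:\mathcal{M}_y(\vec{w})\neq\vec{z}^{\,\mu}_y\ \forall y\}$ of strings compatible with the answer set $A_\mu=\{\vec{z}^{\,\mu}_y\}_y$ that Bob uses on receiving $\mu$. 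The combinatorial fact established in the proof of Theorem~\ref{Classical IR} — that $|\mathrm{Consistent}(A_\mu)|\le\gamma_m$ for every possible answer set, since the answer sets of the form $\mathcal{M}_y(\vec{a})$ maximise the number of compatible strings — is exactly what is needed, and the abort option does not enlarge this set.

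Concretely, first fix the shared randomness to an arbitrary value $r$ and Bob's private randomness to an arbitrary value $r_B$; because correctness is demanded with certainty conditioned on non-abort, this loses nothing and makes Bob's answer set $A^{r,r_B}_\mu$ a genuine function of $\mu$. Let $T^r_\mu$ be the set of $\vec{x}$ for which Alice, given $R=r$, sends $\mu$ with positive probability over her private coins; by the previous paragraph $T^r_\mu\subseteq\mathrm{Consistent}(A^{r,r_B}_\mu)$, so $|T^r_\mu|\le\gamma_m$. Writing $c$ for the worst-case length of Alice's message, there are at most $2^c$ possible values of $\mu$, whence
\begin{equation}
\sum_{\vec{x}\in\{0,1\}^n}\Pr[\text{Alice does not abort}\mid\vec{x},R=r]=\sum_{\mu}\sum_{\vec{x}}\Pr[\text{Alice sends }\mu\mid\vec{x},R=r]\le\sum_{\mu}|T^r_\mu|\le 2^c\,\gamma_m .
\end{equation}
Averaging over $R$ and using the hypothesis that $\Pr[\text{Alice aborts}\mid\vec{x}]\le\delta$ for \emph{every} $\vec{x}$ yields $(1-\delta)\,2^n\le 2^c\gamma_m$, i.e.\ $c\ge n-\log_2\gamma_m-\log_2\frac{1}{1-\delta}$.

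It then remains only to insert the growth of $\gamma_m=\sum_{i=0}^{m-1}\binom{n}{i}$ for $m=\alpha n$ with $0<\alpha<\tfrac12$, which was already carried out in the proof of Theorem~\ref{Classical IR} Part~2b: there $\log_2\gamma_m\le H(\alpha)\,n+o(n)$ with $H$ the binary entropy and $H(\alpha)<1$, so $c\ge\big(1-H(\alpha)\big)n-o(n)-O(1)=\Omega(n)$, as $\delta$ is a fixed constant strictly below $1$. The step I expect to require the most care — and the one a careful reader would scrutinise — is the reduction making the protocol effectively deterministic together with the handling of the abort budget: one must check that fixing Bob's private randomness is legitimate given the zero-error-on-non-abort requirement, that Alice's private randomness is correctly accounted for by passing to the supports $T^r_\mu$, and that tolerating an abort probability as large as $\delta$ on \emph{each} individual pair $(\vec{x},y)$ (rather than only on average) is precisely what the averaging step above absorbs. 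A minor secondary issue, handled in the standard way (padding to a common length or a Markov truncation), is the bookkeeping of Alice's message length when messages are allowed to have variable length.
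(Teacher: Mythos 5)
Your argument is correct, and it reaches the stated $\Omega(n)$ bound by a route that differs from the paper's in its wrapper, though not in its combinatorial core. The paper proceeds through information cost: it first notes (Lemma \ref{Abort Lemma}) that a protocol aborting with probability at most $\delta$ on every input pair also aborts with average probability at most $\delta$ under the uniform distribution, then splits $H\left(X|M_C\right)$ into abort and non-abort contributions, bounds the non-abort part by $\log_2\gamma_m$ via the Claim in the proof of Theorem \ref{Classical IR}, obtains $\textit{IC}_{\textit{unif}}\left(M_C\right)\geq\left(1-\delta\right)\left(n-\log_2\gamma_m\right)$, and finally invokes $\textit{IC}\leq\textit{CCC}$. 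You instead bypass information cost entirely: after fixing the shared randomness and Bob's coins (legitimate, as you note, because zero error on non-abort forces every outcome in the support to be correct), you observe that the non-abort support of each message is contained in the set of strings consistent with Bob's answer set, hence has size at most $\gamma_m$ by the same Claim, and a pigeonhole count over the at most $2^c$ messages against the total non-abort mass $\left(1-\delta\right)2^n$ gives $c\geq n-\log_2\gamma_m-\log_2\frac{1}{1-\delta}$; the asymptotics for $m=\alpha n$ are then identical to the paper's (Lemma \ref{Comb Sum Lemma}). Both proofs thus rest on the same two ingredients — the $\gamma_m$ bound on the number of unexcluded strings and the binary-entropy estimate — but your counting argument is more elementary, bounds the worst-case communication directly, and loses only an additive $O\left(1\right)$ in $\delta$ rather than the multiplicative factor $\left(1-\delta\right)$; the paper's route, at the cost of the extra machinery, yields the stronger statement that even the \emph{information} revealed is $\Omega(n)$ (and a distributional, average-abort version via Lemma \ref{Abort Lemma}), which fits its broader theme. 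Your two flagged points of care — Alice's abort decision being independent of $y$ in a one-way protocol, and the variable-length bookkeeping (which costs at most a factor $2$ in the number of messages) — are handled correctly.
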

\begin{proof}
The full proof is given in Appendix D. First we assume that Alice is allowed to abort with average probability at most $\delta$ where the average is taken over all inputs and any randomness used. We show that in this setting any winning protocol has information cost $\Omega(n)$ when $\vec{x}$ and $y$ are independent and uniformly distributed. Hence, any winning protocol where Alice aborts with probability at most $\delta$ for each pair of inputs also has information cost $\Omega(n)$ on this distribution. Finally, the information cost of a protocol lower bounds its communication cost.
\end{proof}

With access to entangled states, rather than sending $\ket{\Psi_{\vec{x}}\left(\theta_m\right)}$ to Bob directly, Alice could instead attempt to steer Bob's side of the entanglement to the desired state by performing an appropriate measurement on her own system. To see how this would work, suppose Alice and Bob share $n$ entangled states, one for each bit in $\vec{x}$. From \cite{Rudolph2004} we know that there exists an entangled state, $\ket{\Phi}_{AB}$, and two measurements with outcomes labeled by 0 and 1, $\mathcal{S}=\{S_0, S_1\}$ and $\mathcal{R}=\{R_0, R_1\}$, with the following properties. Firstly, if Alice measures her half of $\ket{\Phi}_{AB}$ with $\mathcal{S}$ and obtains the outcome 0, Bob's half of $\ket{\Phi}_{AB}$ is steered to $\ket{\psi_0\left(\theta_m\right)}$ while if she obtains outcome 1, Bob's system is steered to the state $\ket{-}$. Similarly, measuring with $\mathcal{R}$ will steer Bob to either $\ket{\psi_1\left(\theta_m\right)}$ or $\ket{+}$. If the value of $x_i$ determines which of $\mathcal{S}$ and $\mathcal{R}$ Alice applies, the probability that Bob's system is steered to the state $\ket{\psi_{x_i}\left(\theta_m\right)}$ is:
\begin{equation}
P_{\textrm{steer}}=\frac{1}{1+\sin\theta_m}.
\end{equation}
Full details on the form of $\ket{\Phi}_{AB}$, $\mathcal{S}$ and $\mathcal{R}$ are given in Appendix E.

Making use of this steering while allowing Alice to occasionally abort gives the following result.
\begin{theorem} \label{Entanglement Assisted Theorem}
Suppose $m=\alpha n$, $0<\alpha\leq1$ and Alice can abort with probability at most $\delta>0$ on each pair of inputs, $(\vec{x},y)$. Then there exists an entanglement assisted strategy for the exclusion game using $\log_{2}k$ bits of communication, such that when Alice does not abort, Bob is able to produce $\vec{z}_y\neq\mathcal{M}_y\left(\vec{x}\right)$ for any $y$. Here $k$ is some constant that depends on $\delta$ but not on $n$.
\end{theorem}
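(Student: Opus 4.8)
\emph{Proof proposal.} The plan is for Alice and Bob to pre-share $kn$ copies of the bipartite state $\ket{\Phi}_{AB}$ of \cite{Rudolph2004}, grouped into $k$ independent blocks of $n$ pairs, where the constant $k$ will be fixed below. On each block, and for each $i\in[n]$, Alice measures her half of the $i$-th pair with $\mathcal{S}$ if $x_i=0$ and with $\mathcal{R}$ if $x_i=1$; outcome $0$ (which she records as a \emph{success}) steers Bob's half to $\ket{\psi_{x_i}(\theta_m)}$, while outcome $1$ steers it to $\ket{-}$ or $\ket{+}$. Call a block \emph{good} if all $n$ of its pairs yielded successes. If at least one block is good, Alice sends Bob the index of one such block (say the smallest), costing $\lceil\log_2 k\rceil$ bits; otherwise she aborts. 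Given $y$ and a block index $j$, Bob takes the $m$ systems of block $j$ at the positions listed in $y$: these are then exactly in the state $\ket{\Psi_{\mathcal{M}_y(\vec{x})}(\theta_m)}$, so applying the PBR measurement $\mathcal{M}=\{\ket{\zeta_{\vec{z}}}\}$ of Eq.~(\ref{PBR Projector}) with $r=m$ returns, exactly as in Theorem~\ref{Quantum IR}, some $\vec{z}_y\neq\mathcal{M}_y(\vec{x})$ with certainty. Hence whenever Alice does not abort, Bob wins.

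It remains to control the abort probability. Each of the $n$ steerings in a block succeeds independently with probability $P_{\textrm{steer}}=1/(1+\sin\theta_m)$, which does not depend on $x_i$, so a block is good with probability $P_{\textrm{steer}}^{\,n}$; the $k$ blocks being independent, Alice aborts with probability exactly $(1-P_{\textrm{steer}}^{\,n})^k$. This quantity depends on neither $\vec{x}$ nor $y$ — the abort event is a function of $\vec{x}$ and Alice's local outcomes alone — so the requirement of aborting with probability at most $\delta$ on every pair $(\vec{x},y)$ reduces to a single inequality. The key estimate is that $P_{\textrm{steer}}^{\,n}$ is bounded below by a positive constant: since $m=\alpha n$ we have $\theta_m=2\arctan(2^{1/m}-1)=\Theta(1/n)$, hence $\sin\theta_m=\Theta(1/n)$ and $(1+\sin\theta_m)^n\le e^{n\sin\theta_m}\le e^{O(1/\alpha)}$, giving $P_{\textrm{steer}}^{\,n}\ge c$ for a constant $c=c(\alpha)>0$. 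Thus the abort probability is at most $(1-c)^k$, and taking $k=k(\delta,\alpha)=\lceil\ln(1/\delta)/\ln(1/(1-c))\rceil$ makes it at most $\delta$. The communication is $\lceil\log_2 k\rceil$ bits with $k$ independent of $n$, as claimed; if Alice must also be able to signal "abort" within the same message, one enlarges the alphabet to $k+1$ symbols, which changes nothing asymptotically.

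I expect the step needing the most care to be precisely this last estimate. A priori, insisting that \emph{all} $n$ steerings in a block succeed — which Alice has to do, since she commits to one block index before $y$ is revealed — looks as though it should cost an exponentially small probability; it is only because the PBR angle $\theta_m$ decays like $1/m$ while $m=\alpha n$ is linear in $n$ that the loss $n\sin\theta_m$ stays bounded and $k$ can be kept constant, and for $m\in o(n)$ the argument would break down. The remaining ingredients — the exactness of PBR exclusion once the target product state is prepared, the independence and $x_i$-independence of the per-pair steering probabilities, and the concrete form of $\ket{\Phi}_{AB}$, $\mathcal{S}$ and $\mathcal{R}$ — follow from the properties quoted from \cite{Rudolph2004} and collected in Appendix~E. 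The full details are given in Appendix~F.
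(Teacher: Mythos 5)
Your proposal is correct and follows essentially the same route as the paper: pre-share $k$ blocks of $n$ copies of $\ket{\Phi}_{AB}$, steer blockwise with $\mathcal{S}/\mathcal{R}$, send the index of an all-success block (abort otherwise), and have Bob run the PBR measurement on the $y$-positions of that block. The only difference is cosmetic: you bound $P_{\textrm{steer}}^{\,n}$ below by a constant via $\sin\theta_m=\Theta(1/m)$ and $(1+\sin\theta_m)^n\le e^{n\sin\theta_m}$, whereas the paper computes the exact limit $4^{-1/\alpha}$ (via l'H\^opital and monotonicity) before choosing $k(\delta,\alpha)$; both give the same conclusion.
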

\begin{proof}
The full proof is given in Appendix E. Suppose Alice and Bob share $k$ sets of $n$ copies of $\ket{\Phi}_{AB}$. On receiving $\vec{x}$, on the $i^{\textit{th}}$ copy in each set, Alice measures her half using $\mathcal{S}$, if $x_i=0$, and $\mathcal{R}$, if $x_i=1$.

If there is a set for which all measurements resulted in the 0 outcome, Alice knows Bob's systems in this set have been steered to $\ket{\Psi\left(\theta_m\right)}$ and she sends $\log_2{k}$ bits to identify this set to him. Upon receiving $y$, if Bob performs the PBR measurement described by Eq. (\ref{PBR Projector}) on the set, he will obtain $\vec{z}_y\neq\mathcal{M}_y\left(\vec{x}\right)$.

If for each of the $k$ sets the measurement outcome 1 occurs at least once, Alice aborts the game. This happens with probability:
\begin{equation}
P_{abort}=\left(1-\left(\frac{1}{1+\sin\theta_m}\right)^n\right)^k,
\end{equation}
and we show in the Appendix that this can be made less than $\delta$, $\forall n$, for some constant $k$ that depends on $\delta$ but not on $n$.
\end{proof}

From Theorem \ref{LV CC Theorem} and Theorem \ref{Entanglement Assisted Theorem} we obtain our second result. By allowing Alice to occasionally decline to answer, there exist choices of $m$ such that in the exclusion game, with access to entanglement, only a constant amount of communication is required. For classical strategies on the other hand, Alice needs to send $\Omega(n)$ bits of communication.

There is a relation between our two results. It was shown in \cite{Jain2005} that if the first round of a quantum protocol reveals $c$ bits of information, then this round can be replaced with one making use of shared entanglement and $O(c)$ bits of classical communication at the expense of introducing some small additional probability of error. In our setting, allowing the players to abort some fraction of the games avoids this extra error.

\emph{Conclusion.} In this paper, we have designed a communication task which exploits a result from the foundations of quantum mechanics, the PBR theorem. Quantum strategies for this task can drastically outperform classical ones with respect to the amount of information they reveal. Additionally, when the players are allowed an abort probability, the communication complexity is similarly improved by using shared entanglement. In fact, we have shown that while classically on the order of $n$ bits of information need to be revealed or sent, quantum mechanics admits strategies where a constant or even vanishingly small amount is required. This contrasts sharply with the usual measure studied in communication tasks, the communication complexity in the absence of entanglement, where at most an exponential advantage can be gained from using quantum mechanics over classical schemes in the bounded error setting.

Our quantum advantage has been shown to hold for particular scalings of $m$ and in the zero error regime. Open questions are to determine: firstly the optimal quantum strategies in both scenarios for general $m$ and secondly, whether the game can be made more robust against allowing some error. Furthermore, while it can be shown that a quantum strategy for $m\in o(n)$ has communication complexity at least $\Omega\left(\log n\right)$, it is not known if the same holds for $m=\alpha n$ and it may be possible to find a beyond exponential gap with respect to the communication complexity in this regime. Finally, the PBR measurement we use also appears in the task of quantum bet hedging \cite{Arunachalam2013}. It would be interesting to investigate the relationship between the two tasks.

What does the existence of these infinite separations tell us about the structure and power of quantum mechanics? Even though a quantum message may convey a vanishingly small amount of information, to reproduce this information using purely classical means can require an infinitely large amount of information to be sent. The amount of excess informational baggage that a classical model of quantum theory needs to carry round can be very heavy indeed.

\emph{Acknowledgments.} The authors would like to thank Scott Aaronson, Harry Buhrman, Hartmut Klauck, Noah Linden and Matthew Pusey for helpful comments and useful discussions and Matt in particular for bringing the existence of reference \cite{Rudolph2004} to our attention. Part of this work took place while C.P. and J.O. were visiting the `Mathematical Challenges in Quantum Information' programme held at the Isaac Newton Institute for Mathematical Sciences. The work of R.J. is supported by the Singapore Ministry of Education Tier 3 Grant and also the Core Grants of the Centre for Quantum Technologies, Singapore. J.O. is supported by an EPSRC Established Career fellowship.

\bibliography{References}
\bibliographystyle{apsrev4-1}

\clearpage
\widetext
\appendix

\section{Preliminaries} \label{Prelims}

\subsection{Information Theory}

Here we define quantities from both classical and quantum information theory that will be used to define the complexity measures of communication protocols. For a more thorough overview, see, for example, \cite{Nielsen2010}. 

\begin{definition}{\textbf{Entropy.}}
\begin{itemize}
\item The Shannon Entropy of a classical random variable $X$ is given by:
\begin{equation}
H\left(X\right)=-\sum_x p\left(x\right)\log_2p\left(x\right).
\end{equation}
Note in particular that if $X$ has support on $|\mathcal{X}|$ elements, then $H\left(X\right)\leq\log_2{|\mathcal{X}|}$ with equality iff $X$ is uniformly distributed over $\mathcal{X}$.
\item The classical conditional entropy is given by:
\begin{equation} \label{Class Cond Ent}
H\left(X|Y\right)=\sum_y p\left(y\right)H\left(X|Y=y\right),\\
\end{equation}
or equivalently:
\begin{equation}
H\left(X|Y\right)=H\left(X,Y\right)-H\left(Y\right).
\end{equation}
\item The von Neumann entropy of a quantum state $\rho$ is given by:
\begin{equation}
S\left(\rho\right)=-\tr\left(\rho\log_2\rho\right).
\end{equation}
\end{itemize}
\end{definition}

\begin{definition}{\textbf{Mutual Information.}}
\begin{itemize}
\item The mutual information of two classical random variables, $X$ and $Y$ is given by:
\begin{equation}
I\left(X:Y\right)=H\left(X\right)+H\left(Y\right)-H\left(X,Y\right),
\end{equation}
or equivalently:
\begin{equation}
I\left(X:Y\right)=H\left(X\right)-H\left(X|Y\right),
\end{equation}
\item The mutual information between classical random variables $X$ and $Y$ conditioned on a third variable $Z$ is given by:
\begin{equation}
I\left(X:Y|Z\right)=H\left(X|Z\right)-H\left(X|Y,Z\right).
\end{equation}
\item The quantum mutual information for a composite quantum system on $AB$ is given by:
\begin{equation}
I\left(A:B\right)=S\left(A\right)+S\left(B\right)-S\left(A,B\right).
\end{equation}
\item The quantum conditional mutual information for a composite quantum system on $ABC$ (conditioning on subsystem $C$) is given by:
\begin{equation}
I\left(A:B|C\right)=S\left(A,C\right)+S\left(B,C\right)-S\left(C\right)-S\left(A,B,C\right).
\end{equation}
\end{itemize}
\end{definition}

\subsection{Classical Communication Protocols}

In a typical communication task, Alice and Bob are given inputs, $x\in\mathcal{X}$ and $y\in\mathcal{Y}$ respectively, and asked to output $z\in\mathcal{Z}$ such that it matches some function or relation on $x$ and $y$, $f\left(x,y\right)$. As initially neither player has any knowledge of the other's input, some communication, either classical or quantum, may have to take place according to a protocol, $\pi$, for them to achieve their goal. To assist them with their task, they may have access to additional resources such as randomness in the form of either a private or public coin or in the quantum setting, they may share an entangled state.

We will be interested in two measures of the cost of a protocol: the communication cost and the internal information cost and we define these below. For the exclusion game ($\textit{EG}$) considered in this paper, the communication is restricted to be one-way; a message can only be sent from Alice to Bob and our definitions reflect this. We shall use $M_C$ to denote a protocol that requires only a classical message to be sent from Alice to Bob and $M_Q$ to denote one that requires only a quantum message from Alice to Bob. Usually, we also require that upon receiving this message Bob never makes an error in evaluating $\textit{EG}\left(\vec{x},y\right)$.

\begin{definition}{\textbf{One-way, zero error, communication complexity.}}
\begin{itemize}
\item Classical:
\begin{itemize}
\item Let $\Pi_{C}^{\mathit{pub},\rightarrow}$ be the set of protocols in which Alice and Bob have access to shared randomness and their communication is restricted to sending classical messages from Alice to Bob. The communication cost of such a protocol, $\textit{CCC}\left(M_C\right)$, is the maximum number of bits that Alice sends in any run of the protocol.
\item The one-way, zero error, classical communication complexity of $f$, $\textit{CCC}^{\rightarrow}_{0}\left(f\right)$, is the communication cost of the cheapest  protocol in $\Pi_{C}^{\mathit{pub},\rightarrow}$ that allows Bob to evaluate $f\left(x,y\right)$ with zero error.
\end{itemize}
\item Quantum:
\begin{itemize}
\item Let $\Pi_{Q}^{\rightarrow}$ be the set of protocols in which Alice and Bob do not share any entangled states and their communication is restricted to sending quantum messages from Alice to Bob. The communication cost such a protocol, $\textit{QCC}\left(M_Q\right)$, is the maximum number of qubits that Alice sends in any run of the protocol.
\item The one-way, zero error, quantum communication complexity of $f$, $\textit{QCC}^{\rightarrow}_{0}\left(f\right)$, is the communication cost of the cheapest  protocol in $\Pi_{Q}^{\rightarrow}$ that allows Bob to evaluate $f\left(x,y\right)$ with zero error.
\end{itemize}
\item Entanglement Assisted:
\begin{itemize}
\item Let $\Pi_{C}^{\mathit{ent},\rightarrow}$ be the set of protocols in which Alice and Bob may share entangled states and their communication is restricted to sending classical messages from Alice to Bob. The communication cost of such a protocol, $\textit{ECC}\left(M_C\right)$, is the maximum number of bits that Alice sends in any run of the protocol.
\item The one-way, zero error, entanglement assisted communication complexity of $f$, $\textit{ECC}^{\rightarrow}_{0}\left(f\right)$, is the communication cost of the cheapest  protocol in $\Pi_{C}^{\mathit{ent},\rightarrow}$ that allows Bob to evaluate $f\left(x,y\right)$ with zero error.
\end{itemize}
\end{itemize}
\end{definition}

\begin{definition}{\textbf{Internal information cost.}}
Suppose $X$ and $Y$ are distributed according to some joint distribution $\mu$. For protocol $\pi$, let $\pi\left(X,Y\right)$ denote the public randomness and messages exchanged during the protocol. The internal information cost of $\pi$ is then:
\begin{equation}
\textit{IC}_{\mu}\left(\pi\right)=I\left(X:\pi\left(X,Y\right)|Y\right)+I\left(Y:\pi\left(X,Y\right)|X\right).
\end{equation}
If $\pi$ only involves communication from Alice to Bob, this reduces to:
\begin{equation}
\textit{IC}_{\mu}\left(\pi\right)=I\left(X:\pi\left(X\right)|Y\right).
\end{equation}
\end{definition}

In what follows, we will make use of the following lemmas:

\begin{lemma} \label{IC<CC}
For a classical protocol, $\pi_C$, for any distribution, $\mu$, on the inputs:
\begin{equation}
\textit{IC}_{\mu}\left(\pi_C\right)\leq \textit{CCC}\left(\pi_C\right)
\end{equation}
\end{lemma}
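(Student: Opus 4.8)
The plan is to reduce the bound to the elementary inequalities ``mutual information $\le$ entropy'' and ``entropy of a $c$-bit message $\le c$'', applied one bit at a time. Write the public record as $\pi_C(X,Y)=(R,\Pi)$, where $R$ is the public coin, drawn independently of $(X,Y)$, and $\Pi=(\Pi_1,\dots,\Pi_t)$ is the ordered sequence of bits actually exchanged. First I would discard $R$: since $I(X:R\mid Y)=I(Y:R\mid X)=0$, the chain rule gives $\textit{IC}_\mu(\pi_C)=I(X:\Pi\mid YR)+I(Y:\Pi\mid XR)$.

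Next I would expand each of these terms by the chain rule round by round, as $\sum_{j=1}^{t}I(\,\cdot\,:\Pi_j\mid\,\cdot\,R\,\Pi_{<j})$. Fix a round $j$ and suppose without loss of generality that $\Pi_j$ is produced by Alice; then $\Pi_j$ is a function of $X$, $R$, Alice's private coin and $\Pi_{<j}$ only, so $\Pi_j$ is independent of $(Y,R_B)$ given $(X,R,\Pi_{<j})$ and the Alice-side term $I(Y:\Pi_j\mid XR\Pi_{<j})$ vanishes. The Bob-side term obeys $I(X:\Pi_j\mid YR\Pi_{<j})\le H(\Pi_j\mid YR\Pi_{<j})\le H(\Pi_j)\le 1$ because $\Pi_j$ is a single bit. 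The roles of the two terms swap when $\Pi_j$ is Bob's bit. So each round contributes at most $1$ to $\textit{IC}_\mu(\pi_C)$, and summing over the (at most $\textit{CCC}(\pi_C)$) rounds gives $\textit{IC}_\mu(\pi_C)\le\textit{CCC}(\pi_C)$; if the transcript length varies, passing to the expectation only helps, since the expected length is at most the worst-case length.

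For the one-way protocols actually used in this paper the argument is a single line. Alice's message $M=M(X,R,R_A)$ is the only communication, so $M\perp Y\mid XR$ forces $I(Y:RM\mid X)=0$, and $\textit{IC}_\mu(\pi_C)=I(X:RM\mid Y)=I(X:M\mid YR)\le H(M)\le\textit{CCC}(\pi_C)$, the last step because a message of at most $\textit{CCC}(\pi_C)$ bits has entropy at most $\textit{CCC}(\pi_C)$.

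The only genuinely delicate points are bookkeeping ones: justifying the conditional independence $\Pi_j\perp(Y,R_B)\mid(X,R,\Pi_{<j})$ on Alice's rounds (which is exactly the statement that Alice's next bit is computed from her own data together with the transcript so far), and making sure the bound ``entropy $\le$ number of bits'' is applied to a fixed-length or prefix-free encoding of each message. Everything else is two applications of the chain rule together with $I\le H\le\log(\text{alphabet size})$.
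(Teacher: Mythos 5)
Your proof is correct: the paper does not prove this lemma in-line but simply defers to \cite{Braverman2011}, and your argument---drop the public coin since it is independent of the inputs, expand both conditional-information terms bit by bit with the chain rule, note that for each transmitted bit the sender's-side term vanishes and the receiver's-side term is at most one---is exactly the standard proof found there, with the one-way specialization being all the paper actually needs. The conditional independence you flag on Alice's rounds, $\Pi_j\perp(Y,R_B)$ given $(X,R,\Pi_{<j})$, is the usual rectangle/Markov property of protocol transcripts (the event $\Pi_{<j}=\pi_{<j}$ factors, for fixed $X,R$, into an Alice-side condition on $R_A$ and a Bob-side condition on $(Y,R_B)$), so the only ``delicate point'' you identify is indeed standard and your proof is complete.
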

\begin{proof}
See, for example, \cite{Braverman2011}.
\end{proof}

\begin{lemma}
Suppose $X$ and $Y$ are independent and uniformly distributed, $\mu=\textit{unif}$. Then in a one-way classical protocol:
\begin{equation}
\textit{IC}_{\textit{unif}}\left(M_C\right)=\log_2\left|\mathcal{X}\right|-H\left(X|M_C\right). \label{Clas Inf Cost Ap}
\end{equation} 
\end{lemma}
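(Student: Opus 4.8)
The plan is to start from the one-way specialization of the internal information cost recorded above, $\textit{IC}_{\mu}(M_C)=I\left(X:\pi(X)\,|\,Y\right)$, which holds for arbitrary $\mu$, and then feed in the two structural features of the exclusion game: that $X$ is uniform over $\mathcal{X}$, and that $X$ and $Y$ are independent. Write $\Pi:=\pi(X)$ for the full transcript Bob sees, i.e.\ the public randomness together with Alice's single message (this is what $M_C$ denotes in the statement). I would expand the conditional mutual information using the identity $I(X:\Pi\,|\,Y)=H(X\,|\,Y)-H(X\,|\,\Pi,Y)$.

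For the first term, independence of $X$ and $Y$ gives $H(X\,|\,Y=y)=H(X)$ for every $y$, so by the definition of conditional entropy $H(X\,|\,Y)=H(X)$, and since $X$ is uniform over $\mathcal{X}$ this equals $\log_2\left|\mathcal{X}\right|$. This part is a direct substitution.

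The step that needs care is showing $H(X\,|\,\Pi,Y)=H(X\,|\,\Pi)$, i.e.\ that once Bob holds the transcript his own input $Y$ carries no further information about $X$. Here one-wayness is essential: Alice's message is a (possibly randomized) function of $X$, the public randomness $R$, and Alice's private coins $R_A$ only --- she never sees $Y$ --- so the pair $(X,\Pi)$ is a deterministic function of $(X,R,R_A)$, and each of $X$, $R$, $R_A$ is independent of $Y$. Hence $(X,\Pi)$ is independent of $Y$, which gives $H(X\,|\,\Pi,Y)=H(X\,|\,\Pi)$. Combining the three observations yields $\textit{IC}_{\textit{unif}}(M_C)=\log_2\left|\mathcal{X}\right|-H(X\,|\,\Pi)=\log_2\left|\mathcal{X}\right|-H(X\,|\,M_C)$, as required. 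I do not expect a genuine obstacle: the only subtlety is the independence bookkeeping in this last step, and everything else is substitution into standard entropy identities.
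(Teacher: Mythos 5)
Your proof is correct and follows essentially the same route as the paper: expand $I(X:M_C\,|\,Y)=H(X|Y)-H(X|M_C,Y)$, use independence of $X$ and $Y$ together with the joint independence of $(X,M_C)$ from $Y$, then substitute uniformity of $X$. Your explicit justification that $(X,\Pi)\perp Y$ follows from one-wayness (Alice's message depends only on $X$ and her coins) is exactly the fact the paper invokes when it asserts ``$X, M_C$ independent of $Y$'' in its joint-entropy expansion, so there is no substantive difference.
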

\begin{proof}
\begin{align*}
\textit{IC}_{\textit{unif}}\left(M_C\right)&=I\left(X:M_C|Y\right), \nonumber\\
&=H\left(X|Y\right)-H\left(X|M_C,Y\right), \nonumber\\
&=H\left(X\right)-H\left(X|M_C,Y\right), \quad \textrm{as }X\textrm{ is independent of }Y,\nonumber\\
&=H\left(X\right)+H\left(M_C,Y\right)-H\left(X,Y,M_C\right), \nonumber\\
&=H\left(X\right)+H\left(M_C\right)+H\left(Y\right)-H\left(X,M_C\right)-H\left(Y\right), \quad \textrm{as } X, M_C \textrm{ independent of }Y,\nonumber\\
&=H\left(X\right)-H\left(X|M_C\right), \nonumber\\
&=\log_2\left|\mathcal{X}\right|-H\left(X|M_C\right), \quad \textrm{as }X\textrm{ is uniformly distributed.}
\end{align*}
\end{proof}

\begin{lemma}
The information cost of a one-way quantum protocol can be bounded to give:
\begin{equation}
\textit{IC}_{\mu}\left(M_Q\right)\leq 2 S\left(M_Q\right).\label{Quan Inf Cost Ap}
\end{equation}
\end{lemma}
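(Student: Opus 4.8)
The plan is to collapse $\textit{IC}_{\mu}(M_Q)$ to a single mutual-information term and then apply two elementary entropy inequalities. Since the exclusion game uses one-way communication, $\pi(X)$ here is just Alice's message register $M_Q$ (together with any shared public coin, which we absorb into $M_Q$; the strategy of Theorem \ref{Quantum IR} in any case uses none), so by the reduction above $\textit{IC}_{\mu}(M_Q)=I(X:M_Q|Y)+I(Y:M_Q|X)$. The second term vanishes for every $\mu$: conditioned on $X=x$ the register $M_Q$ is in a fixed state $\rho^{x}$, so the state on $Y M_Q$ given $X=x$ is a product and $I(Y:M_Q|X)=0$. Hence it suffices to show $I(X:M_Q|Y)\leq 2S(M_Q)$ on the classical--quantum state over $X,Y,M_Q$ produced by the protocol.

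First I would drop the conditioning on $Y$: treating $XY$ as a single (classical) register and using the chain rule $I(XY:M_Q)=I(Y:M_Q)+I(X:M_Q|Y)$ together with positivity of quantum mutual information, $I(Y:M_Q)\geq 0$, gives $I(X:M_Q|Y)\leq I(XY:M_Q)$.

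Second I would bound $I(XY:M_Q)=S(XY)+S(M_Q)-S(XY,M_Q)$ with the Araki--Lieb triangle inequality $S(XY,M_Q)\geq|S(XY)-S(M_Q)|$, which yields $I(XY:M_Q)\leq 2\min\{S(XY),S(M_Q)\}\leq 2S(M_Q)$. Chaining the two bounds gives $\textit{IC}_{\mu}(M_Q)=I(X:M_Q|Y)\leq I(XY:M_Q)\leq 2S(M_Q)$. If a public-coin register $R$ is kept explicit rather than absorbed into the message, I would repeat the argument conditioned on each value of $R$ (which is independent of $(X,Y)$, so it does not contribute to either information term) and then use concavity of the von Neumann entropy, $\sum_r p(r)\,S(M_Q)_{R=r}\leq S(M_Q)$, to recover the bound in terms of the average message state.

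None of these steps is hard: the mutual-information chain rule, positivity, Araki--Lieb and concavity are all standard. The only genuine modelling decision --- and the natural place for a careful reader to want more detail --- is the first step, namely specifying exactly what register plays the role of ``the message'' in a quantum one-way protocol and verifying that the Bob-side term $I(Y:M_Q|X)$ is zero; once that is settled the constant $2$, which comes from the crude bound $I(A{:}B)\leq 2\min\{S(A),S(B)\}$ rather than a Holevo-type estimate $I(A{:}B)\leq S(B)$, is comfortably sufficient.
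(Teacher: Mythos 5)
Your proof is correct and follows essentially the same route as the paper's: reduce the information cost to a single mutual-information term (the Bob-side term vanishing for one-way communication) and bound it by $2S(M_Q)$ via the Araki--Lieb inequality. The only difference is that you remove the conditioning on $Y$ by the chain rule and positivity of $I(Y{:}M_Q)$, whereas the paper uses independence of $Y$ from $(X,M_Q)$; your variant has the minor advantage of holding verbatim for correlated input distributions $\mu$.
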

\begin{proof}
\begin{align}
\textit{IC}_{\mu}\left(M_Q\right)&=I\left(X:M_Q|Y\right), \nonumber\\
&=S\left(X,Y\right)+S\left(M_Q,Y\right)-S\left(Y\right)-S\left(X,M_Q,Y\right), \nonumber\\
&=S(X)+S(M_Q)-S(X,M_Q), \quad \textrm{as } X, M_Q \textrm{ independent of }Y,\nonumber\\
&\leq S(X)+S(M_Q) -\left|S(X)-S(M_Q)\right|, \quad \textrm{using the Araki-Lieb inequality for the joint entropy,}\nonumber\\
&\leq2S(M_Q). \nonumber
\end{align}
\end{proof}

\subsubsection{Protocols with abort}

We shall also be interested in a modification of the exclusion game where Alice is allowed to abort the game with some probability. When she does not abort, Bob should again be able to evaluate $f\left(x,y\right)$ with zero error.

\begin{definition}{\textbf{One-way, $\delta$-abort, communication complexity.}}
Given $X,Y$ drawn according to some distribution $\mu$ and a parameter $\delta$, $0<\delta<1$,  define:
\begin{itemize}
\item Classical:
\begin{itemize}
\item The set $\Pi_{C,(\delta,\textit{max})}^{\mathit{pub},\rightarrow}$ to be the set of all classical, one-way, protocols with access to shared randomness such that Alice aborts with probability at most $\delta$ on any pair of inputs, $(x,y)$, and Bob calculates $f(x,y)$ with zero error when she does not abort.
\item $\textit{CCC}^{\rightarrow}_{\delta}\left(f\right)$ to be the classical communication cost of the cheapest protocol contained in $\Pi_{C,(\delta,\textit{max})}^{\mathit{pub},\rightarrow}$. Denote this protocol by $\pi^{*}$.
\item The set $\Pi_{C,(\delta,\mu)}^{\mathit{pub},\rightarrow}$ to be the set of all classical, one-way, protocols with access to shared randomness such that Alice aborts with probability at most $\delta$ (where the inputs are sampled according to $\mu$) and Bob calculates $f(x,y)$ with zero error when she does not abort.
\end{itemize}
\item Entanglement Assisted:
\begin{itemize}
\item The set $\Pi_{C,(\delta,\textit{max})}^{\mathit{ent},\rightarrow}$ be the set of protocols in which Alice and Bob may share entangled states and their communication is restricted to sending classical messages from Alice to Bob such that Alice aborts with probability at most $\delta$ on any pair of inputs, $(x,y)$, and Bob calculates $f(x,y)$ with zero error when she does not abort.
\item $\textit{ECC}^{\rightarrow}_{\delta}\left(f\right)$ to be the entanglement assisted communication cost of the cheapest protocol contained in $\Pi_{C,(\delta,\textit{max})}^{\mathit{ent},\rightarrow}$.
\end{itemize}
\end{itemize}
\end{definition}

The following Lemma will prove useful.

\begin{lemma} \label{Abort Lemma}
For $\delta$, such that $0<\delta<1$, and any distribution, $\mu$, on $X$ and $Y$:
\begin{equation}
\textit{CCC}^{\rightarrow}_{\delta}\left(f\right)=\textit{CCC}\left(\pi^{*}\right)\geq \textit{IC}_{\mu}\left(\pi^{*}\right)\geq\min_{\pi\in\Pi_{C,(\delta,\mu)}^{\mathit{pub},\rightarrow}}\textit{IC}_{\mu}\left(\pi\right).
\end{equation}
\end{lemma}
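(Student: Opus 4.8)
The plan is to verify the displayed chain one link at a time, reading from left to right. The leftmost equality is immediate from the definitions: $\pi^{*}$ was introduced precisely as the cheapest protocol in $\Pi_{C,(\delta,\textit{max})}^{\mathit{pub},\rightarrow}$, so by definition its communication cost $\textit{CCC}\left(\pi^{*}\right)$ coincides with $\textit{CCC}^{\rightarrow}_{\delta}\left(f\right)$. For the middle inequality I would simply invoke Lemma \ref{IC<CC}: since $\pi^{*}$ is a classical protocol, $\textit{IC}_{\mu}\left(\pi^{*}\right)\leq\textit{CCC}\left(\pi^{*}\right)$ holds for every distribution $\mu$ on the inputs, in particular for the $\mu$ under consideration.

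The only link requiring an actual argument is the rightmost inequality, and there the key observation is the set inclusion $\Pi_{C,(\delta,\textit{max})}^{\mathit{pub},\rightarrow}\subseteq\Pi_{C,(\delta,\mu)}^{\mathit{pub},\rightarrow}$. If a protocol aborts with probability at most $\delta$ on \emph{every} input pair $(x,y)$, then, averaging over $(x,y)$ drawn according to $\mu$ and over any public or private randomness used, the overall abort probability is a convex combination of these per-input abort probabilities, hence still at most $\delta$; and the requirement that Bob make zero error whenever Alice does not abort is literally the same in both classes. Consequently $\pi^{*}$ itself belongs to $\Pi_{C,(\delta,\mu)}^{\mathit{pub},\rightarrow}$, so the minimum of $\textit{IC}_{\mu}$ over that class is no larger than $\textit{IC}_{\mu}\left(\pi^{*}\right)$, which is exactly the last inequality and closes the chain.

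There is essentially no deep obstacle here: the statement is a bookkeeping lemma whose entire content is the inclusion of the worst-case-abort protocol class inside the average-abort protocol class, combined with the standard fact that information cost lower bounds communication cost. The one point to handle carefully is making the convex-combination argument for the abort probability fully precise when both public and private randomness are present, so that the per-pair guarantee genuinely transfers to the $\mu$-averaged guarantee; once that is spelled out, the lemma follows by concatenating the three observations above.
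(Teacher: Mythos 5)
Your proposal is correct and takes essentially the same route as the paper: the middle inequality is an application of Lemma \ref{IC<CC}, and the final inequality comes from the inclusion $\Pi_{C,(\delta,\textit{max})}^{\mathit{pub},\rightarrow}\subseteq\Pi_{C,(\delta,\mu)}^{\mathit{pub},\rightarrow}$, which the paper establishes by exactly the averaging bound you describe, namely $\sum_{x,y}p\left(\textrm{abort}|x,y\right)\mu\left(x,y\right)\leq\delta$, so that $\pi^{*}$ lies in the average-abort class. The only difference is cosmetic: you spell out the leftmost equality (which the paper treats as definitional) and flag the randomness bookkeeping, which is already subsumed in the conditional abort probability $p\left(\textrm{abort}|x,y\right)$.
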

\begin{proof}
The first inequality follows from Lemma \ref{IC<CC}. To see the second inequality, note that the probability a protocol in $\Pi_{C,(\delta,\textit{max})}^{\mathit{pub},\rightarrow}$ aborts when $X,Y$ are distributed according to $\mu$ is:
\begin{equation}
\sum_{x,y}p\left(\textrm{abort}|x,y\right)\mu\left(x,y\right)\leq\sum_{x,y}\delta\mu\left(x,y\right)\leq\delta.
\end{equation}
Hence $\pi^{*}\in\Pi_{C,(\delta,\textit{max})}^{\mathit{pub},\rightarrow}\subseteq\Pi_{C,(\delta,\mu)}^{\mathit{pub},\rightarrow}$.
\end{proof}

\newpage
\section{Proof of Theorem 1}

\begin{theorem*} \nonumber
Suppose $m\in\omega\left(n^{\frac{1}{2}+\beta}\right), \beta>0$. There exists a one-way, quantum strategy for the exclusion game (for all prior distributions on $\vec{x}$ and $y$) such that Bob is able to produce $\vec{z}_y\neq\mathcal{M}_y\left(\vec{x}\right)$, for any $y$, while the amount of information Alice reveals to Bob regarding $\vec{x}$ tends to zero in the limit of large $n$.
\end{theorem*}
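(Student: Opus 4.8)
The plan is to analyze the strategy already sketched in the main text and make the entropy estimate quantitative. First I would establish correctness. Alice, on input $\vec{x}$, sends the $n$-qubit product state $\ket{\Psi_{\vec{x}}(\theta_m)}=\bigotimes_{i=1}^{n}\ket{\psi_{x_i}(\theta_m)}$. Restricting this to the $m$ systems indexed by $y$ gives exactly $\bigotimes_{i\in y}\ket{\psi_{x_i}(\theta_m)}=\ket{\Psi_{\mathcal{M}_y(\vec{x})}(\theta_m)}$, i.e.\ a PBR preparation of $m$ systems at precisely the angle $\theta_m=\theta_r|_{r=m}$ for which the exclusion measurement $\{\ket{\zeta_{\vec{z}}}\}_{\vec{z}\in\{0,1\}^{m}}$ of Eq.~(\ref{PBR Projector}) exists. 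A short calculation, which I would include, expands $\ket{\Psi_{\vec{w}}(\theta)}$ in the computational basis and uses $\sum_{\vec{s}}\cos^{m-|\vec{s}|}(\theta/2)\sin^{|\vec{s}|}(\theta/2)=(\cos(\theta/2)+\sin(\theta/2))^{m}$ to obtain $\bracket{\zeta_{\vec{w}}}{\Psi_{\vec{w}}(\theta)}=2^{-m/2}\big(2\cos^{m}(\theta/2)-(\cos(\theta/2)+\sin(\theta/2))^{m}\big)$, which vanishes exactly when $\tan(\theta/2)=2^{1/m}-1$, i.e.\ $\theta=\theta_m$. Hence Bob's outcome $\vec{z}_y$ is never equal to $\mathcal{M}_y(\vec{x})$, for every $y$ and every $\vec{x}$; this argument is independent of the prior on $(\vec{x},y)$.

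Next I would bound the information cost. By Eq.~(\ref{Quan Inf Cost Ap}) it suffices to upper bound $S(\rho)$, where $\rho=\sum_{\vec{x}}\mu_X(\vec{x})\ketbra{\Psi_{\vec{x}}(\theta_m)}{\Psi_{\vec{x}}(\theta_m)}$ is the average message. Subadditivity of the von Neumann entropy gives $S(\rho)\le\sum_{i=1}^{n}S(\rho_i)$, where each single-qubit marginal is $\rho_i=p_i\ketbra{\psi_0(\theta_m)}{\psi_0(\theta_m)}+(1-p_i)\ketbra{\psi_1(\theta_m)}{\psi_1(\theta_m)}$, a mixture of two pure states with overlap $|\bracket{\psi_0(\theta_m)}{\psi_1(\theta_m)}|=\cos\theta_m$. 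The eigenvalues of such a qubit are $\tfrac12\big(1\pm\sqrt{1-4p_i(1-p_i)(1-\cos^{2}\theta_m)}\,\big)$; the entropy is maximized at $p_i=\tfrac12$, where they become $\tfrac12(1\pm\cos\theta_m)$, so $S(\rho_i)\le h\!\left(\tfrac{1+\cos\theta_m}{2}\right)$ with $h$ the binary entropy, uniformly in the prior. Therefore $\textit{IC}_\mu(M_Q)\le 2n\,h\!\left(\tfrac{1+\cos\theta_m}{2}\right)$.

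Finally I would push through the asymptotics. Since $2^{1/m}-1=\tfrac{\ln 2}{m}+O(m^{-2})$ and $\arctan x=x+O(x^{3})$, we get $\theta_m=\tfrac{2\ln 2}{m}+O(m^{-2})$, hence $1-\tfrac{1+\cos\theta_m}{2}=\sin^{2}(\theta_m/2)=\Theta(m^{-2})$. Using $h(1-\varepsilon)=\Theta\!\left(\varepsilon\log(1/\varepsilon)\right)$ as $\varepsilon\to0$, this yields $h\!\left(\tfrac{1+\cos\theta_m}{2}\right)=\Theta\!\left(\tfrac{\log m}{m^{2}}\right)$ and so $\textit{IC}_\mu(M_Q)=O\!\left(\tfrac{n\log m}{m^{2}}\right)$. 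When $m\in\omega(n^{1/2+\beta})$ with $\beta>0$, $m^{2}\in\omega(n^{1+2\beta})$ and $\log m=O(\log n)$, so this bound is $O\!\left(\tfrac{\log n}{n^{2\beta}}\right)\to0$, which completes the proof.

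I expect the main obstacle to be the asymptotic bookkeeping in the last paragraph: getting the $\Theta(m^{-2})$ rate for $1-\cos\theta_m$ and the $\varepsilon\log(1/\varepsilon)$ rate for the binary entropy near $1$ sharp enough to see exactly why the threshold sits near $m=\sqrt{n}$, and hence why $m\in\omega(n^{1/2+\beta})$ is the right hypothesis. The correctness of the PBR measurement on the restricted subsystem and the subadditivity/qubit-entropy bound are essentially a direct assembly of ingredients already recorded in the excerpt.
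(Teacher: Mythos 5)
Your proposal is correct and follows essentially the same route as the paper's proof: the same PBR-based protocol, the bound $\textit{IC}_\mu(M_Q)\leq 2S(M_Q)$, a per-qubit entropy bound with eigenvalues $\tfrac{1}{2}(1\pm\cos\theta_m)$, and the asymptotic estimate $\theta_m=\Theta(1/m)$ giving $O\!\left(n\log m/m^{2}\right)\to 0$ for $m\in\omega(n^{1/2+\beta})$. The only differences are cosmetic strengthenings—you verify the vanishing overlap directly instead of citing the PBR/Bandyopadhyay result, and you justify explicitly that the single-qubit entropy is maximized at $p_i=\tfrac{1}{2}$, which the paper leaves implicit.
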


More formally, for $m\in\omega\left(n^{\frac{1}{2}+\beta}\right), \beta>0$, there exists a quantum strategy, $M_Q$, such that $\textit{IC}_{\mu}\left(M_Q\right)\rightarrow 0$ as $n\rightarrow \infty$, $\forall \mu$.

\begin{proof}
We give an explicit protocol that achieves this:
\begin{enumerate}
\item Alice receives input $\vec{x}\in\{0,1\}^n$ from the referee.
\item Alice prepares the state:
\begin{equation}
\ket{\Psi_{\vec{x}}\left(\theta_m\right)}=\bigotimes^{n}_{i=1}\ket{\psi_{x_i}\left(\theta_m\right)},
\end{equation}
where:
\begin{align}
\begin{split}
\ket{\psi_0\left(\theta\right)}&=\cos\left(\frac{\theta}{2}\right)\ket{0}+\sin\left(\frac{\theta}{2}\right)\ket{1}, \\
\ket{\psi_1\left(\theta\right)}&=\cos\left(\frac{\theta}{2}\right)\ket{0}-\sin\left(\frac{\theta}{2}\right)\ket{1},
\end{split}
\end{align}
and:
\begin{equation}
\theta_m=2\arctan\left(2^{1/m}-1\right). \label{PBR angle Ap}
\end{equation}
\item Alice sends $M_Q=\ket{\Psi_{\vec{x}}\left(\theta_m\right)}$ to Bob.
\item Bob receives input $y$ from the referee and considers the systems in Alice's message specified by $y$. On these systems, Bob has the state:
\begin{equation}
\ket{\Psi_{\mathcal{M}_y\left(\vec{x}\right)}\left(\theta_m\right)}=\bigotimes_{i\in y}\ket{\psi_{x_i}\left(\theta_m\right)}.
\end{equation}  
\item To the systems specified by $y$, Bob applies the projective measurement, $\mathcal{M}=\{\ket{\zeta_{\vec{z}_y}}\}_{\vec{z}_y\in\{0,1\}^m}$, where:
\begin{equation}
\ket{\zeta_{\vec{z}_y}}=\frac{1}{\sqrt{2^m}}\left(\ket{\vec{0}}-\sum_{\vec{s}\neq\vec{0}}\left(-1\right)^{\vec{z}_y\cdot\vec{s}}\ket{\vec{s}}\right).
\end{equation}
and obtains outcome $\vec{z}_y$.
\item Bob outputs $\vec{z}_y$ as the answer to the referee's question.
\end{enumerate}

Firstly, note that this is a winning strategy as $\bracket{\zeta_{\mathcal{M}_y\left(\vec{x}\right)}}{\Psi_{\mathcal{M}_y\left(\vec{x}\right)}\left(\theta_m\right)}=0$, \cite{Bandyopadhyay2014}, so Bob always outputs $\vec{z}_y\neq\mathcal{M}_y\left(\vec{x}\right)$ and the players always succeed at their task.

To bound the information cost of the protocol, by Eq. (\ref{Quan Inf Cost Ap}), it suffices to consider the entropy of the message sent by Alice:
\begin{align*}
\quad S\left(M_Q\right)&\leq nS\left(\frac{1}{2}\ketbra{\psi_{0}\left(\theta_m\right)}{\psi_{0}\left(\theta_m\right)}+\frac{1}{2}\ketbra{\psi_{1}\left(\theta_m\right)}{\psi_{1}\left(\theta_m\right)}\right),\\
&=n\left[-\left(\left[\cos^2\left(\frac{\theta_m}{2}\right)\right]\log_2\left[\cos^2\left(\frac{\theta_m}{2}\right)\right]+\left[\sin^2\left(\frac{\theta_m}{2}\right)\right]\log_2\left[\sin^2\left(\frac{\theta_m}{2}\right)\right]\right)\right],\\
&< n \left(\frac{\theta_m}{2}\right)^2\left(\frac{1}{\ln 2}-\log_2\left[\left(\frac{\theta_m}{2}\right)^2\right]\right), \quad \textrm{for small }\theta_m.
\end{align*}
Now consider the scaling behavior of $\theta_m$. From Eq. (\ref{PBR angle Ap}), we have:
\begin{equation*}
\frac{1}{m}=\log_2\left(1+\tan\left(\frac{\theta_m}{2}\right)\right).
\end{equation*}
Taking the Taylor series expansion about $\theta_m=0$ gives:
\begin{equation*}
\frac{1}{m}=\frac{1}{\ln2}\frac{\theta_m}{2}-\frac{1}{\ln4}\left(\frac{\theta_m}{2}\right)^2+\frac{{\theta_m}^3}{4\ln8}+O\left({\theta_m}^4\right).
\end{equation*}
Hence, for small $\theta_m$ we have:
\begin{align*}
\frac{1}{m}&=\log_2\left(1+\tan\left(\frac{\theta_m}{2}\right)\right),\\
&< \frac{1}{\ln2}\frac{\theta_m}{2},\\
&<\frac{2}{\ln2}\frac{\theta_m}{2}-\frac{2}{\ln4}\left(\frac{\theta_m}{2}\right)^2,\\
&< \frac{2}{m}.
\end{align*}
Using these upper and lower bounds on $\theta_m$, we obtain:
\begin{align*}
S\left(M_Q\right)<\frac{n}{m^2}\left(2\ln2\right)^2\left[\frac{1}{\ln2}+\log_2\left(\frac{m^2}{\left(\ln2\right)^2}\right)\right], \quad \textrm{for large }m.
\end{align*}
Hence, provided $m\in\omega\left(n^{\frac{1}{2}+\beta}\right), \beta>0$, the entropy of the message sent by Alice and the information complexity of the protocol, tend to zero in the limit of large $n$.
\end{proof}

\newpage
\section{Proof of Theorem 2}

\begin{theorem*} 
If $\vec{x}$ and $y$ are chosen independently and from the uniform distribution, $\mu=\textit{unif}$:
\begin{enumerate}
\item Any one-way, public coin, classical strategy, $M_C$, for the exclusion game such that Bob is able to produce $\vec{z}_y\neq\mathcal{M}_y\left(\vec{x}\right)$, for any $y$, is such that:
\begin{equation}
\textit{IC}_{\textit{unif}}\left(M_C\right)\geq n - \log_{2} \left(\gamma_m\right), \label{Classical IL general m Ap}
\end{equation}
where $\gamma_m=\sum_{i=0}^{m-1}{n\choose i}$.
\item For the following paramaterizations of $m$ we find:
\begin{enumerate}
\item If both $m\in\omega\left(\sqrt{n}\right)$ and $m\in o(n)$ hold, then $\textit{IC}_{\textit{unif}}\left(M_C\right)\geq n-o(n)$.
\item If $m=\alpha n$ for some constant $\alpha$, $0<\alpha<\frac{1}{2}$, then $\textit{IC}_{\textit{unif}}\left(M_C\right)\in\Omega\left(n\right)$.
\end{enumerate}
\end{enumerate}
\end{theorem*}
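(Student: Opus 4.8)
The plan is to convert the information cost into Bob's residual uncertainty about $\vec{x}$ and reduce Part~1 to a single extremal inequality about subsets of the Boolean cube. I would start from Eq.~(\ref{Clas Inf Cost Ap}): for the uniform prior $\textit{IC}_{\textit{unif}}(M_C)=n-H(X\mid M_C)$, where, if a public coin $R$ is present, one reads $M_C$ as the full transcript $(R,M_C)$ — the identity is unchanged since $Y$ is independent of everything Alice produces — so it suffices to prove $H(X\mid M_C)\le\log_2\gamma_m$. Fix a value of the transcript and let $T\subseteq\{0,1\}^n$ be the set of inputs $\vec{x}$ consistent with it. Because Bob must answer correctly with certainty (and may be taken deterministic for a zero-error task), for every $y\in\binom{[n]}{m}$ his answer $\vec{z}_y$ must differ from $\mathcal{M}_y(\vec{x})$ for \emph{every} $\vec{x}\in T$; hence $\mathcal{M}_y(T)\neq\{0,1\}^m$ for every size-$m$ set $y$. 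Equivalently, viewed as a set system on $[n]$, $T$ shatters no $m$-element set, i.e.\ has VC dimension at most $m-1$. Since $H(X\mid M_C)$ is the average over transcripts of terms $H(X\mid M_C{=}m_c)\le\log_2|T|$, Part~1 reduces to the claim: \emph{every $T\subseteq\{0,1\}^n$ that shatters no $m$-subset satisfies $|T|\le\gamma_m=\sum_{i=0}^{m-1}\binom{n}{i}$.}

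\textbf{The extremal inequality (main step).} This is the Sauer--Shelah lemma, which I would prove by a compression (``down-shifting'') argument: for each coordinate $i$ in turn, replace every $\vec{x}\in T$ with $x_i=1$ by $\vec{x}\oplus e_i$ whenever $\vec{x}\oplus e_i\notin T$. Each such step preserves $|T|$ and — the key lemma — can only shrink, never enlarge, the collection of shattered sets; iterating to a fixed point yields a downward-closed family $T'$ with $|T'|=|T|$ that still shatters no $m$-set. A downward-closed family shatters every set it contains, so all its members have size at most $m-1$, whence $|T'|\le\sum_{i=0}^{m-1}\binom{n}{i}=\gamma_m$. (The extremal configuration is a Hamming ball of radius $m-1$, matching the ``canonical'' answer set $\vec{z}_y=\mathcal{M}_y(\vec{a})$ for a fixed reference string $\vec{a}$ sketched in the main text, up to swapping $0\leftrightarrow1$; one could instead follow that route directly and count the $\vec{x}$ at Hamming distance greater than $n-m$ from $\vec{a}$, which is exactly $\gamma_m$.) This gives $\textit{IC}_{\textit{unif}}(M_C)\ge n-\log_2\gamma_m$, i.e.\ Part~1.

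\textbf{Asymptotics (Part~2).} I would use the textbook estimate $\sum_{i=0}^{k}\binom{n}{i}\le 2^{n\,h(k/n)}$, valid for $k\le n/2$, with $h(p):=-p\log_2 p-(1-p)\log_2(1-p)$ the binary entropy function, so that $\log_2\gamma_m\le n\,h\bigl((m-1)/n\bigr)$. For case~(a), $m\in o(n)$ forces $(m-1)/n\to0$, hence $h\bigl((m-1)/n\bigr)\to0$ and $\log_2\gamma_m=o(n)$, giving $\textit{IC}_{\textit{unif}}(M_C)\ge n-o(n)$ (the hypothesis $m\in\omega(\sqrt{n})$ is not needed for this estimate and only serves to match the regime of the quantum strategy of Theorem~1). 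For case~(b), $m=\alpha n$ with $0<\alpha<\tfrac12$ gives $h\bigl((m-1)/n\bigr)<h(\alpha)<1$, so $\textit{IC}_{\textit{unif}}(M_C)\ge n\bigl(1-h(\alpha)\bigr)\in\Omega(n)$.

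\textbf{Main obstacle.} Everything hinges on the extremal inequality $|T|\le\gamma_m$: the passage from information cost to this inequality is immediate from the definitions, and Part~2 is only binomial-tail bookkeeping. Within the compression proof the one genuinely delicate ingredient is the shift lemma — that a single down-shift never produces a newly shattered set — which is the crux of Sauer--Shelah; if one instead takes the ``canonical $\vec{a}$'' route, the delicate step becomes showing that no valid collection $\{\vec{z}_y\}$ can leave more than $\gamma_m$ strings unexcluded, i.e.\ that an answer set of that special form is optimal.
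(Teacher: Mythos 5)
Your proposal is correct, and its central step is genuinely different from the paper's. The reduction is the same on both sides: with the public coin folded into the transcript, $\textit{IC}_{\textit{unif}}(M_C)=n-H(X\mid M_C)$ and $H(X\mid M_C{=}m_c)\le\log_2|T|$ for the set $T$ of inputs consistent with (or, in the paper's version, not excluded by) the transcript; and your Part~2 is the same binary-entropy tail estimate, in fact slightly more general, since the paper only treats case~(a) for the parametrization $m=n^{1-\epsilon}$ while your argument covers all $m\in o(n)$, and you correctly note that the $m\in\omega(\sqrt{n})$ hypothesis plays no role in the classical bound. Where you diverge is the extremal inequality $|T|\le\gamma_m$: the paper proves it by arguing, via an inclusion--exclusion count over pairs and triples of answers, that the answer set excluding the fewest strings is the ``consistent'' one $\vec{z}_y=\mathcal{M}_y(\vec{a})$ for a single reference string $\vec{a}$, and then counts the survivors of that canonical answer set; you instead observe that zero error forces $\mathcal{M}_y(T)\neq\{0,1\}^m$ for every $m$-set $y$, i.e.\ $T$ has VC dimension at most $m-1$, and invoke Sauer--Shelah (proved by down-shifting). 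Your route buys a cleaner and arguably more airtight argument --- the paper's optimality claim is only verified in detail for three overlapping answers, with the general case asserted by ``similar arguments'', whereas the shifting proof of Sauer--Shelah is standard and complete --- and it makes the extremal configuration transparent (a Hamming ball of radius $m-1$, matching the paper's canonical $\vec{a}$). What the paper's approach buys is a direct, self-contained counting argument tied to the structure of Bob's answers, which also gets reused verbatim in the proof of Theorem~3; your bound plugs into that proof equally well, since all that is needed there is $H(X\mid\textrm{non-abort transcript})\le\log_2\gamma_m$.
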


\subsection{Proof of Part 1.}
Since Bob has to answer correctly with probability one, we can assume that Bob's strategy is deterministic (by fixing Bob's private coins). Recall that $M_C$ includes the public coins of the protocol and note that Alice is allowed to use private coins. 

For any winning strategy, upon receiving $M_C$ from Alice, Bob must be able to construct a correct answer, $\vec{z}_y\neq\mathcal{M}_y\left(\vec{x}\right)$ for each possible $y$. We denote this set of answers by $A_{\vec{x}}=\{\vec{z}_y\}$. Each of the ${n \choose m}$ elements of $A_{\vec{x}}$ allows Bob to deduce a set, $S_{{\vec{z}}_y}$, of $2^{n-m}$ strings not equal to Alice's input, $\vec{x}$. $S_{{\vec{z}}_y}$ consists of all $\vec{x}$ such that $\mathcal{M}_y\left(\vec{x}\right)=\vec{z}_y$. Hence each $\vec{z}_y\in A_{\vec{x}}$ reveals some information about Alice's input to Bob, although there may be some overlap between the elements in different $S_{\vec{z}_y}$. The complete set of strings that $M_C$ allows Bob to rule out is given by $S\left(M_C\right)=\cup_{y}S_{\vec{z}_y}$.

Let $T\left(M_C\right)$ be the set of $\vec{x}$ that the message $M_C$ does not allow Bob to rule out. Then $H\left(X|M_C\right)$ will have support only on $\vec{x}\in T\left(M_C\right)$ and hence $H\left(X|M_C\right)\leq\log_{2}\left|T\left(M_C\right)\right|.$ Combining this with Eq. (\ref{Clas Inf Cost Ap}) gives:
\begin{equation}
\textit{IC}_{\textit{unif}}\left(M_C\right)\geq n -\log_{2}\left|T\left(M_C\right)\right|. \label{Support Bound}
\end{equation}
To lower bound the information cost of a winning protocol, we need to calculate the set of answers which allows Bob to exclude the fewest possible strings.

\begin{claim} \nonumber
The set $A_{\vec{x}}$ that minimizes the size of $S\left(M_C\right)=\cup_{\vec{z}_{y}}S_{\vec{z}_{y}}$ is of the form $A_{\vec{x}}=\{\vec{z}_y: \vec{z}_y=\mathcal{M}_y\left(\vec{a}_{\vec{x}}\right)\}$, where $\vec{a}_{\vec{x}}\in\{0,1\}^n$ is some suitably chosen bit string such that $\mathcal{M}_{y}\left(\vec{a}_{\vec{x}}\right)\neq\mathcal{M}_{y}\left(\vec{x}\right)$, $\forall y$.
\end{claim}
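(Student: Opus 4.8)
The plan is to turn the claim into an extremal question about unions of subcubes of the hypercube and resolve it by induction on $n$.

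First, normalize: XOR-ing every string with $\vec{x}$ is a bijection of $\{0,1\}^n$ that sends each $S_{\vec{z}_y}$ to $S_{\vec{z}_y\oplus\mathcal{M}_y(\vec{x})}$ and preserves all cardinalities, so I may assume $\vec{x}=\vec{0}$; the winning condition then reads $\vec{z}_y\ne\vec{0}$ for every $m$-subset $y$. Each $S_{\vec{z}_y}$ is the codimension-$m$ subcube $C_y=\{\vec{w}:\mathcal{M}_y(\vec{w})=\vec{z}_y\}$, the set Bob can rule out is $S(M_C)=\bigcup_y C_y$, and $T(M_C)=\{0,1\}^n\setminus S(M_C)$. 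I would prove the self-contained bound: \emph{for any assignment of a pattern $\vec{z}_y\in\{0,1\}^y$ to every $m$-subset $y\subseteq[n]$} (no constraint needed), $|T(M_C)|\le\gamma_m=\sum_{i=0}^{m-1}\binom ni$. Granting this, $H(X\mid M_C)\le\log_2|T(M_C)|\le\log_2\gamma_m$, which is Eq.~(\ref{Classical IL general m Ap}); and the bound is attained — hence a minimizer of $|S(M_C)|$ is of the stated form — because a \emph{consistent} family $\vec{z}_y=\mathcal{M}_y(\vec{a})$ with $\vec{a}$ having at most $m-1$ zeros (which respects $\vec{z}_y\ne\vec{0}$, e.g.\ $\vec{a}=\vec{1}$) gives $\bigcup_y C_y=\{\vec{w}:d(\vec{w},\vec{a})\le n-m\}$, $d$ the Hamming distance, so $|T(M_C)|=\sum_{i=0}^{m-1}\binom ni=\gamma_m$ exactly. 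Undoing the XOR turns $\vec{a}$ into the required $\vec{a}_{\vec{x}}$, which may then be taken to be $\vec{0}$ by the same symmetry.

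For the bound $|T(M_C)|\le\gamma_m$ I would induct on $n$, with base cases $m=1$ ($T(M_C)$ is the single string disagreeing with every $\vec{z}_{\{i\}}$, so $|T(M_C)|=1=\gamma_1$) and $m=n$ ($|T(M_C)|=2^n-1=\gamma_n$). For $2\le m\le n-1$, split $\{0,1\}^n=\{0,1\}^{n-1}\times\{0,1\}$ on the last coordinate. The $m$-subsets $y\subseteq[n-1]$ contribute, in each layer, the same union $R$ of codimension-$m$ subcubes of $\{0,1\}^{n-1}$, indexed by all $m$-subsets of $[n-1]$. Each $m$-subset $y\ni n$ contributes a codimension-$(m-1)$ subcube $D_{y\setminus\{n\}}$ of $\{0,1\}^{n-1}$ lying in the single layer picked out by the value $\vec{z}_y$ assigns to coordinate $n$; letting $E_0,E_1$ be the unions of these over the two layers, the excluded set is $(R\cup E_0)\times\{0\}\cup(R\cup E_1)\times\{1\}$. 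Applying $|A|+|B|=|A\cup B|+|A\cap B|$ with $A=R\cup E_0$, $B=R\cup E_1$ (so $A\cap B\supseteq R$ and $A\cup B=R\cup E_0\cup E_1$) gives $|S(M_C)|\ge|R|+|R\cup E_0\cup E_1|$. Now invoke the inductive bound at parameters $(n-1,m)$ for $R$, and at $(n-1,m-1)$ for the full family $\{D_{y'}\}_{y'}$ over all $(m-1)$-subsets of $[n-1]$ underlying $E_0\cup E_1$; passing to complements in $\{0,1\}^{n-1}$ yields $|T(M_C)|\le\sum_{i<m}\binom{n-1}{i}+\sum_{i<m-1}\binom{n-1}{i}=\sum_{i<m}\binom ni=\gamma_m$ by Pascal's rule.

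The delicate point — and why a quick averaging argument does not work — is this last step: one must not throw the codimension-$(m-1)$ subcubes away (bounding each layer merely by $|R|$ gives only $|T(M_C)|\le 2\sum_{i<m}\binom{n-1}{i}$, which exceeds $\gamma_m$), but instead keep all of $E_0\cup E_1$ intact via the $|A|+|B|=|A\cup B|+|A\cap B|$ identity so that the two inductive estimates slot exactly into Pascal's recursion for $\gamma_m$. Making this honest requires checking that the sub-collection $\{D_{y'}\}$ really is an instance of the same extremal problem at smaller parameters, in particular that the upper bound needs no hypothesis on the patterns $\vec{z}_{y'}$ — which is precisely why I would formulate the extremal lemma without the winning constraint from the outset.
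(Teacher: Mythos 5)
Your proof is correct, and it takes a genuinely different route from the paper's. The paper argues directly about the structure of the optimal answer set: it expands $|S(M_C)|$ by inclusion--exclusion over the $\binom{n}{m}$ subcubes, introduces the overlap and agreement counts $k_{ij}$, $r_{ij}$, and shows by a term-by-term comparison --- carried out explicitly only for three answers, using $k_{12}+k_{23}-k_{123}\le m$, with higher-order terms dispatched by ``similar arguments'' --- that making any pair of answers inconsistent cannot shrink the union, so the minimizer may be taken consistent, $\vec z_y=\mathcal M_y(\vec a_{\vec x})$. You instead prove the stronger, unconstrained extremal statement $|T(M_C)|\le\gamma_m$ for an \emph{arbitrary} assignment of patterns, by induction on $n$: splitting on the last coordinate produces one instance at $(n-1,m)$ and one at $(n-1,m-1)$, and the identity $|A|+|B|=|A\cup B|+|A\cap B|$ combines the two inductive bounds exactly along Pascal's rule; the consistent family then attains $\gamma_m$, which settles the claim. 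Your version buys rigor --- the paper's treatment of the higher-order inclusion--exclusion terms is not spelled out, whereas your induction is closed, and your observation that the lemma must be stated without the winning constraint (so that the sub-families are legitimate smaller instances) is exactly the right precaution --- and it delivers directly the inequality that Theorem 2 consumes. The paper's exchange argument, when completed, gives slightly more structural information about the optimizer. One pedantic remark: you establish that a consistent family is a minimizer and that nothing beats it, not that \emph{every} minimizer is consistent; but the latter is neither needed nor really proved by the paper either.
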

\begin{proof}
To determine the set of answers, $A_{\vec{x}}$, which minimizes the size of $S=\cup_{\vec{z}_{y}}S_{\vec{z}_{y}}$, first:
\begin{itemize}
\item Label the answers $\vec{z}_{y_i}$, $1\leq i\leq{n \choose m}$.
\item Let $k_{ij}=|y_i\cap y_j|$ be the number of places in which answers $\vec{z}_{y_i}$ and $\vec{z}_{y_j}$ overlap (i.e. refer to the same bit in $\vec{x}$). Note that $0\leq k_{ij} \leq m-1$.
\item Similarly define $k_{ij\dots l}$ to be the number of places where answers $\vec{z}_{y_i},\vec{z}_{y_j},\dots \vec{z}_{y_l}$ overlap.
\item Let $r_{ij}$ be the number of places in which answers $\vec{z}_{y_i}$ and $\vec{z}_{y_j}$ agree (i.e. assign the same value to a common location in $\vec{x}$). Note that $0\leq r_{ij} \leq k_{ij}$. 
\end{itemize}
With these definitions, we proceed as follows:
\begin{itemize}
\item Answer $\vec{z}_{y_1}$ excludes $2^{n-m}$ strings.
\item Answer $\vec{z}_{y_2}$ excludes $2^{n-m}$ strings. Some of these strings may have already been excluded by $\vec{z}_{y_1}$ and this will occur iff $r_{12}=k_{12}$, i.e. the two answers give the same value for the bits they overlap on. The number of strings that have already been excluded by $\vec{z}_1$ is then $\delta_{r_{12},k_{12}}2^{n-2m+k_{12}}$, so the number of new strings excluded by $\vec{z}_{y_2}$ is:
\begin{equation}
2^{n-m}-\delta_{r_{12},k_{12}}2^{n-2m+k_{12}}.
\end{equation}
\item Answer $\vec{z}_{y_3}$ excludes $2^{n-m}$ strings but we need to subtract the strings excluded by ($\vec{z}_{y_1}$ and $\vec{z}_{y_3}$), ($\vec{z}_{y_2}$ and $\vec{z}_{y_3}$) and add back in the strings excluded by ($\vec{z}_{y_1}$ and $\vec{z}_{y_2}$ and $\vec{z}_{y_3}$). The number of new strings excluded is thus given by:
\begin{equation}
2^{n-m}-\delta_{r_{13},k_{13}}2^{n-2m+k_{13}}-\delta_{r_{23},k_{23}}2^{n-2m+k_{23}}+\delta_{r_{12},k_{12}}\delta_{r_{13},k_{13}}\delta_{r_{23},k_{23}}2^{n-3m+k_{12}+k_{13}+k_{23}-k_{123}}. \label{Ans Construct}
\end{equation}
Here $k_{123}$ is the number of locations where $\vec{z}_{y_1}$, $\vec{z}_{y_2}$ and $\vec{z}_{y_3}$ overlap.
\item This construction then needs to be continued up to answer $\vec{z}_{y_{n \choose m}}$ and the number of new strings each mask excludes summed to give the total number of strings excluded.
\end{itemize}
From this construction, we see that to minimize the number of strings excluded, one way is to choose $A_{\vec{x}}$ to be such that $r_{ij}=k_{ij}, \forall i,j$. Note that if we had $r_{13}<k_{13}$ in Eq. (\ref{Ans Construct}), then it is not possible to exclude fewer strings. To see this note that for three subsets $y_1, y_2$ and $y_3$ of $[n]$, each of size $m$:
\begin{align*}
m&=|y_2|,\\
&\geq|y_2\cap\left(y_1\cup y_3\right)|,\\
&=|y_1\cap y_2|+|y_2\cap y_3|-|y_1\cap y_2 \cap y_3|,\\
&=k_{12}+k_{23}-k_{123},\\
\Rightarrow\quad n-3m+k_{12}+k_{13}+k_{23}-k_{123}&\leq n-2m+k_{13},\\
\Rightarrow\quad 2^{n-3m+k_{12}+k_{13}+k_{23}-k_{123}}&\leq 2^{n-2m+k_{13}},\\
\Rightarrow\quad 2^{n-2m+k_{23}}&\leq 2^{n-2m+k_{13}}+2^{n-2m+k_{23}}-2^{n-3m+k_{12}+k_{13}+k_{23}-k_{123}},
\end{align*}
and setting $\delta_{r_{13},k_{13}}=0$ does not exclude fewer strings. Similar arguments show that other $\delta_{r,k}$ must be non zero.

Hence, the answers should be consistent with one another i.e. $A_{\vec{x}}=\{\vec{z}_y: \vec{z}_y=\mathcal{M}_y\left(\vec{a}_{\vec{x}}\right)\}$ where $\vec{a}_{\vec{x}}\in\{0,1\}^n$ is some suitably chosen bit string that ensures the $\vec{z}_y$ are winning answers.
\end{proof}

Without loss of generality, to calculate the number of strings such a $A_{\vec{x}}$ will exclude, we can assume $\vec{a}_{\vec{x}}$ to be the all zero string, $\vec{0}$. Here the $\vec{x}$ that Bob can exclude are precisely those containing $m$ or more zeros. The number of remaining possibilities is given by $\gamma_m=\sum_{i=0}^{m-1} {n \choose i}$. Substituting $\left|T\left(M_C\right)\right|=\gamma_m$ in Eq. (\ref{Support Bound}), gives:
\begin{equation}
\textit{IC}_{\textit{unif}}\left(\pi\right)\geq n - \log_{2} \left(\gamma_m\right),
\end{equation}
as required.

\subsection{Proof of Part 2.}

Given Eq. (\ref{Classical IL general m Ap}), we wish to show how it behaves for the particular $m$ given in the statement of Part 2. To do this the following Lemma will be useful:
\begin{lemma} \label{Comb Sum Lemma} \cite[Page 427]{Flum2006}
Let  $n\geq1$ and $0<q\leq\frac{1}{2}$. Then:
\begin{equation}
\sum_{i=0}^{\left\lfloor qn\right\rfloor}{n \choose i}\leq 2^{nH\left(q\right)},
\end{equation}
where $H(q)$ is the binary entropy of $q$.
\end{lemma}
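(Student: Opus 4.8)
The plan is to prove this standard entropy bound on partial binomial sums by the classical ``planting a geometric distribution'' trick, starting from the binomial theorem. First I would write $1=(q+(1-q))^n=\sum_{i=0}^{n}\binom{n}{i}q^i(1-q)^{n-i}$, and since every summand is nonnegative, truncate the sum to obtain
\begin{equation}
1\geq\sum_{i=0}^{\left\lfloor qn\right\rfloor}\binom{n}{i}q^i(1-q)^{n-i}.
\end{equation}

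Next I would lower bound each retained term by a quantity independent of $i$. Writing $q^i(1-q)^{n-i}=(1-q)^n\left(\frac{q}{1-q}\right)^i$ and using the hypothesis $q\leq\frac12$, which gives $\frac{q}{1-q}\leq1$, the map $i\mapsto\left(\frac{q}{1-q}\right)^i$ is nonincreasing. Hence for every $i$ with $0\leq i\leq\left\lfloor qn\right\rfloor\leq qn$ we have $\left(\frac{q}{1-q}\right)^i\geq\left(\frac{q}{1-q}\right)^{qn}$, so
\begin{equation}
q^i(1-q)^{n-i}\geq(1-q)^n\left(\frac{q}{1-q}\right)^{qn}=q^{qn}(1-q)^{(1-q)n}=2^{-nH(q)},
\end{equation}
where the last equality is just the definition $H(q)=-q\log_2 q-(1-q)\log_2(1-q)$.

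Combining the two displays, $1\geq 2^{-nH(q)}\sum_{i=0}^{\left\lfloor qn\right\rfloor}\binom{n}{i}$, and rearranging yields the claimed inequality $\sum_{i=0}^{\left\lfloor qn\right\rfloor}\binom{n}{i}\leq 2^{nH(q)}$. There is essentially no obstacle here: the only point requiring a moment's care is the direction of the monotonicity step, which is exactly where the hypothesis $q\leq\frac12$ is used (for $q>\frac12$ the bound as stated would be false without truncating at $n-\lfloor qn\rfloor$ instead). One can also note the edge cases $q\to0$ and the convention $0\log_2 0=0$ cause no trouble, and $n\geq1$ ensures the sum is nonempty.
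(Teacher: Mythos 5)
Your proof is correct: the truncation of the binomial expansion of $(q+(1-q))^n$ followed by the termwise lower bound $q^i(1-q)^{n-i}\geq 2^{-nH(q)}$ for $i\leq qn$ (valid precisely because $q\leq\tfrac12$ makes $\tfrac{q}{1-q}\leq1$) is the standard argument, and every step, including the rearrangement, goes through. The paper itself gives no proof of this lemma — it is quoted directly from the cited reference (Flum and Grohe) — so your self-contained derivation is a welcome addition rather than a deviation; it matches the textbook proof that the reference itself uses.
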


\subsubsection{Instance (a).}

Here both $m\in\omega\left(\sqrt{n}\right)$ and $m\in o(n)$ hold. Suppose $m=n^{1-\epsilon}$, $0<\epsilon<\frac{1}{2}$. Then:
\begin{align*}
\textit{IC}_{\textit{unif}}\left(M_C\right)&\geq n-\log_2\gamma_{n^{1-\epsilon}},\\
&> n-\log_2\left(\sum_{i=0}^{n^{1-\epsilon}} {n \choose i}\right),\\
&\geq n -\log_2\left(2^{n H\left(n^{-\epsilon}\right)}\right), \quad \textrm{(Using Lemma \ref{Comb Sum Lemma})},\\
&=n-n H\left(n^{-\epsilon}\right),\\
&\geq n -\log_2\left(e\right)n^{1-\epsilon}-\epsilon n^{1-\epsilon}\log_2\left(n\right) \quad \textrm{for large n}.
\end{align*}
Hence for this parametrization of $m$, $\textit{IC}_{\textit{unif}}\left(M_C\right)\geq n-o(n)$.

\subsubsection{Instance (b).}

Here $m=\alpha n$, for some constant $\alpha$ such that $0<\alpha<\frac{1}{2}$. Then:
\begin{align*}
\textit{IC}_{\textit{unif}}\left(M_C\right)&\geq n-\log_2\gamma_{\alpha n},\\
&>n-\log_2\left(\sum_{i=0}^{\alpha n} {n \choose i}\right),\\ 
&\geq n - \log_2\left(2^{n H\left(\alpha\right)}\right),\quad \textrm{(Using Lemma \ref{Comb Sum Lemma})}\\
&=n-n H(\alpha).
\end{align*}
Hence for this parametrization of $m$, $\textit{IC}_{\textit{unif}}\left(M_C\right)\in\Omega\left(n\right)$.

\newpage
\section{Proof of Theorem 3}

\begin{theorem*} 
Suppose $m=\alpha n$, $0<\alpha<\frac{1}{2}$ and Alice can abort with probability at most $\delta>0$ on each pair of inputs, $(\vec{x},y)$. Any one-way, public coin, classical strategy, $M_C$, for the exclusion game such that when Alice does not abort, Bob is able to produce $\vec{z}_y\neq\mathcal{M}_y\left(\vec{x}\right)$ for any $y$, has communication cost $\Omega(n)$.
\end{theorem*}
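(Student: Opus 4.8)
The plan is to derive the communication lower bound from an information-cost lower bound, following the sketch, and to recycle the combinatorial analysis of Theorem~2 (Part~1) essentially verbatim, giving away only a multiplicative factor $1-\delta$ because of the abort branch. First I would note that a protocol aborting with probability at most $\delta$ on every pair $(\vec{x},y)$ in particular aborts with average probability at most $\delta$ when $X,Y$ are independent and uniform, so it lies in $\Pi_{C,(\delta,\textit{unif})}^{\mathit{pub},\rightarrow}$; hence, by Lemma~\ref{IC<CC} and Lemma~\ref{Abort Lemma}, it suffices to show that every $\pi\in\Pi_{C,(\delta,\textit{unif})}^{\mathit{pub},\rightarrow}$ has $\textit{IC}_{\textit{unif}}(\pi)\in\Omega(n)$. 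Since Bob must be correct with certainty whenever Alice does not abort, I would fix Bob's private coins so that his reply is a deterministic function of the message, absorb the public coins and the abort flag into the one-way message $M_C$ (legitimate because $M_C$ is still a randomized function of $X$ alone, Alice never seeing $Y$), and use Eq.~(\ref{Clas Inf Cost Ap}) to write $\textit{IC}_{\textit{unif}}(M_C)=n-H(X\mid M_C)$.

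The core step is to estimate $H(X\mid M_C)=\sum_{\mu_0}\Pr[M_C=\mu_0]\,H(X\mid M_C=\mu_0)$ by splitting the transcripts $\mu_0$ into those that signal an abort and those that do not. If $\mu_0$ does not signal an abort, then Bob outputs some winning string $\vec{z}_y$ for every $y$, so every $\vec{x}$ with $\Pr[X=\vec{x},M_C=\mu_0]>0$ must lie in the surviving set $T(\mu_0)$ from the proof of Theorem~2; the Claim established there, together with the computation following it, shows that $T(\mu_0)$ is largest when Bob's answers are mutually consistent and therefore $|T(\mu_0)|\le\gamma_m$, whence $H(X\mid M_C=\mu_0)\le\log_2\gamma_m$. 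This extremal bound is unaffected by the conditioning on ``not abort'' because it only constrains Bob's answer family, not how that message arose. If $\mu_0$ does signal an abort I would use only $H(X\mid M_C=\mu_0)\le n$. Setting $p=\Pr[\text{abort}]\le\delta$ and using $\log_2\gamma_m\le n$, so that $pn+(1-p)\log_2\gamma_m$ is nondecreasing in $p$, I get $H(X\mid M_C)\le pn+(1-p)\log_2\gamma_m\le\delta n+(1-\delta)\log_2\gamma_m$, i.e. $\textit{IC}_{\textit{unif}}(M_C)\ge(1-\delta)(n-\log_2\gamma_m)$.

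To close the argument, for $m=\alpha n$ with $0<\alpha<\frac{1}{2}$ Lemma~\ref{Comb Sum Lemma} gives $\log_2\gamma_m\le nH(\alpha)$ with $H(\alpha)<1$, so $\textit{IC}_{\textit{unif}}(M_C)\ge(1-\delta)(1-H(\alpha))\,n\in\Omega(n)$, and combining this with the reductions above yields $\textit{CCC}(M_C)\in\Omega(n)$ with explicit constant $(1-\delta)(1-H(\alpha))$. The step I expect to be delicate is exactly the abort bookkeeping: the factor $1-\delta$ must multiply $\log_2\gamma_m$ and not merely $n$, since it is this that makes the bound $\Omega(n)$ for every $\delta<1$ rather than only for $\delta<1-H(\alpha)$; one should also check that ``signals an abort'' versus ``does not signal an abort'' is a genuine partition of the support of $M_C$, so that the two case bounds combine with weights summing to one. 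Everything else is a direct transcription of the Theorem~2 estimate.
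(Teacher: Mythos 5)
Your proposal is correct and follows essentially the same route as the paper's own proof: reduce maximum-abort to average-abort via the abort lemma, split $H(X\mid M_C)$ over abort and non-abort transcripts, bound the non-abort branch by $\log_2\gamma_m$ using the Claim from Theorem~2, and conclude $\textit{IC}_{\textit{unif}}(M_C)\geq(1-\delta)(n-\log_2\gamma_m)\in\Omega(n)$ for $m=\alpha n$, $\alpha<\tfrac{1}{2}$. Your extra bookkeeping (monotonicity in the actual abort probability $p\leq\delta$ and the partition of transcripts) only makes explicit what the paper leaves implicit.
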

More formally, for $m=\alpha n$, $0<\alpha<\frac{1}{2}$, $CCC_{\delta}^{\rightarrow}\left(\textit{EG}\right)\in\Omega\left(n\right)$.

\begin{proof}
Since Bob has to answer correctly with probability one (on a non-abort message from Alice), we can assume that Bob's strategy is deterministic (by fixing Bob's private coins). Recall that $M_C$ includes the public coins of the protocol and note that Alice is allowed to use private coins. 

Using Lemma \ref{Abort Lemma}, it suffices to show that if $\vec{x}$ and $y$ are chosen independently and from the uniform distribution, $\mu=\textit{unif}$, any strategy that aborts with probability at most $\delta$ but allows Bob to answer correctly otherwise, is such that $\textit{IC}_{\textit{unif}}\left(M_C\right)\in\Omega\left(n\right)$.

Consider $H\left(X|M_C\right)$. Using Eq. \eqref{Class Cond Ent}:
\begin{equation}
H\left(X|M_C\right)=p\left(\text{abort}\right)H\left(X|\textrm{Alice aborts}\right)+p\left(\text{non abort}\right)H\left(X|\textrm{Alice does not abort}\right).
\end{equation}
To obtain a bound on $\textit{IC}_{\textit{unif}}\left(\pi\right)$, we need to upper bound this quantity. The first conditional entropy in the sum is trivially upper bounded by $n$ as $H\left(S|T\right)\leq H(S)$. If Alice does not abort, then, for any input $y$, Bob must win the game with certainty. This means that we can apply the reasoning from the proof of Part 1 of Theorem \ref{Classical IR} to upper bound the second conditional entropy by $\log_2{\gamma_m}$.

This gives:
\begin{align*}
\textit{IC}_{\textit{unif}}\left(M_C\right)&\geq n - \left(\delta n+\left(1-\delta\right)\log_2{\gamma_m}\right),\\
&=\left(1-\delta\right)\left(n-\log_2{\gamma_m}\right).
\end{align*}
From the proof of Part 2b of Theorem \ref{Classical IR}, we know that for $m=\alpha n$, $0<\alpha<\frac{1}{2}$, this expression is $\Omega\left(n\right)$.
\end{proof}

\newpage
\section{Proof of Theorem 4}

\begin{theorem*} 
Suppose $m=\alpha n$, $0<\alpha\leq1$ and Alice can abort with probability at most $\delta>0$ on each pair of inputs, $(\vec{x},y)$. Then there exists a one-way, entanglement assisted, strategy for the exclusion game using $\log_{2}k$ bits of communication, such that when Alice does not abort, Bob is able to produce $\vec{z}_y\neq\mathcal{M}_y\left(\vec{x}\right)$ for any $y$. Here $k$ is some constant that depends on $\delta$ but not on $n$.
\end{theorem*}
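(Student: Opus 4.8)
The plan is to exhibit an explicit entanglement-assisted protocol built from the steering result of Rudolph et al.\ (the state $\ket{\Phi}_{AB}$ together with the two-outcome measurements $\mathcal{S}$ and $\mathcal{R}$, as detailed in Appendix~E) and the PBR measurement of Eq.~(\ref{PBR Projector}), and then to control how often Alice must abort. Suppose Alice and Bob share $k$ independent ``sets'', each consisting of $n$ copies of $\ket{\Phi}_{AB}$, where $k$ is a constant to be fixed at the end. On input $\vec{x}$, in every set Alice measures the $i^{\textit{th}}$ copy of her half using $\mathcal{S}$ if $x_i=0$ and $\mathcal{R}$ if $x_i=1$, and she calls a set \emph{good} if all $n$ of its measurements return the outcome $0$; in that case all $n$ of Bob's systems in that set have been steered to $\ket{\psi_{x_i}(\theta_m)}$, i.e.\ jointly to $\ket{\Psi_{\vec{x}}(\theta_m)}$. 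If at least one set is good Alice sends its index to Bob, costing $\lceil\log_2 k\rceil$ bits; otherwise she aborts. On receiving $y$ and the index, Bob applies the PBR measurement $\{\ket{\zeta_{\vec{z}_y}}\}$ to the $m$ systems of the good set labelled by $y$ and outputs the outcome.

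Correctness is then immediate: since $\theta_m$ is exactly the PBR angle for $m$ systems, on a good set Bob's $y$-systems are in $\ket{\Psi_{\mathcal{M}_y(\vec{x})}(\theta_m)}$ and $\bracket{\zeta_{\mathcal{M}_y(\vec{x})}}{\Psi_{\mathcal{M}_y(\vec{x})}(\theta_m)}=0$, so his answer can never be $\mathcal{M}_y(\vec{x})$ --- this is the same argument as in the proof of Theorem~1. Note that Alice need not know $y$, because a good set has every copy correctly steered, so any size-$m$ restriction of it is already of the product form the PBR measurement requires.

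For the abort probability, the steering property gives that each of Alice's measurements returns outcome $0$ with probability $P_{\textrm{steer}}=1/(1+\sin\theta_m)$, independently of which of $\mathcal{S},\mathcal{R}$ she applied and independently across copies, so a given set is good with probability $P_{\textrm{steer}}^{\,n}$ and
\begin{equation*}
P_{abort}=\left(1-\left(\frac{1}{1+\sin\theta_m}\right)^{n}\right)^{k},
\end{equation*}
which is the same value for every pair $(\vec{x},y)$; it therefore suffices to choose the constant $k$ so that $P_{abort}<\delta$ uniformly in $n$.

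The main step, and the one place that needs real work, is to show that for $m=\alpha n$ the quantity $P_{\textrm{steer}}^{\,n}=(1+\sin\theta_m)^{-n}$ is bounded below by a positive constant $p_\alpha$ that does not depend on $n$. From $\theta_m=2\arctan(2^{1/m}-1)$ together with $2^{1/m}-1=\tfrac{\ln 2}{m}+O(1/m^{2})$, $\arctan t=t+O(t^{3})$ and $\sin\theta=\theta+O(\theta^{3})$, one finds $\ln(1+\sin\theta_m)=\tfrac{2\ln 2}{\alpha n}+O(1/n^{2})$, so $(1+\sin\theta_m)^{n}\to\exp(2\ln 2/\alpha)=2^{2/\alpha}$ as $n\to\infty$; being a convergent sequence of reals it is bounded above by some finite $C_\alpha$ over all $n\ge 1$, giving $p_\alpha:=1/C_\alpha>0$. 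I expect this elementary asymptotic estimate to be the only genuine obstacle --- it is exactly what ties $k$ to $\delta$ and explains why a linear scaling $m=\alpha n$ is the relevant regime. With $P_{\textrm{steer}}^{\,n}\ge p_\alpha$ in hand we get $P_{abort}\le(1-p_\alpha)^{k}$, and since $0<1-p_\alpha<1$ any integer $k\ge\ln(1/\delta)/\ln(1/(1-p_\alpha))$ forces $P_{abort}<\delta$ for every $n$; as this $k$ depends only on $\alpha$ and $\delta$, the communication cost $\lceil\log_2 k\rceil$ is a constant independent of $n$, which is what the theorem asserts.
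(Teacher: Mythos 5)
Your proposal is correct and matches the paper's proof essentially step for step: the same $k$ shared sets of $n$ copies of $\ket{\Phi}_{AB}$ measured with $\mathcal{S}$ or $\mathcal{R}$ according to $x_i$, the same good-set/abort rule with a $\log_2 k$-bit index message, the same PBR-orthogonality argument for zero-error exclusion, and the same abort probability $\bigl(1-(1+\sin\theta_m)^{-n}\bigr)^k$, uniform over all inputs. The only (harmless) difference is the final estimate: the paper evaluates $\lim_{n\to\infty}(1+\sin\theta_m)^{-n}=4^{-1/\alpha}$ via l'H\^{o}pital and invokes monotonicity in $n$ to obtain the explicit bound $P^{\textrm{global}}_{\textrm{steer}}\geq 4^{-1/\alpha}$, whereas you obtain a non-explicit positive constant from a Taylor expansion plus boundedness of a convergent sequence --- either version suffices to choose $k$ depending only on $\alpha$ and $\delta$.
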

More formally, $\textit{ECC}_{\delta}\left(\textit{EG}\right)\leq\log_2 k$ for some constant $k$.
\begin{proof}
Making use of the state targeting strategy of \cite{Rudolph2004}, suppose Alice and Bob share $k$ sets of $n$ copies of $\ket{\Phi_{AB}}$ where:
\begin{equation} \label{PhiAB}
\ket{\Phi_{AB}}=\sqrt{\frac{1}{2}\left(1+\frac{\cos\theta_m}{1+\sin\theta_m}\right)}\ket{00}+\sqrt{\frac{1}{2}\left(1-\frac{\cos\theta_m}{1+\sin\theta_m}\right)}\ket{11}.
\end{equation}
Figure \ref{Bob Sphere} shows the reduced state of $\ket{\Phi}_{AB}$ on Bob's Bloch sphere. This reduced state is given by:
\begin{equation}
\rho_B=\left(\begin{array}{cc}
\frac{1}{2}\left(1+\frac{\cos\theta_m}{1+\sin\theta_m}\right) &  0 \\
0 & \frac{1}{2}\left(1-\frac{\cos\theta_m}{1+\sin\theta_m}\right)
\end{array}\right).
\end{equation}
\begin{figure}
\centering
\includegraphics[width=0.8\columnwidth]{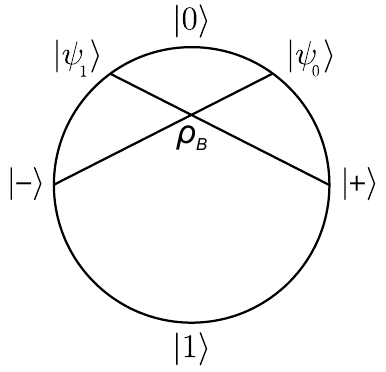}
\caption{$\ket{\Phi}_{AB}$ as viewed on Bob's Bloch sphere \cite{Rudolph2004}. $\rho_B$ denotes his reduced state.} \label{Bob Sphere}
\end{figure}

The following measurement will be useful for our protocol.
\begin{claim}
If Alice measures the state $\ket{\Phi}_{AB}$ with:
\begin{enumerate}
\item $\mathcal{S}=\{{S_0},{S_1}\}$ where:
\begin{equation}
\begin{split} \label{S}
S_0&=\ketbra{s_0}{s_0}, \textrm{ with } \ket{s_0}=\sqrt{\frac{1}{2}\left(1+\frac{\cos\theta_m}{1+\sin\theta_m}\right)}\ket{0}+\sqrt{\frac{1}{2}\left(1-\frac{\cos\theta_m}{1+\sin\theta_m}\right)}\ket{1},\\
S_1&=\ketbra{s_1}{s_1}, \textrm{ with } \ket{s_1}=\sqrt{\frac{1}{2}\left(1-\frac{\cos\theta_m}{1+\sin\theta_m}\right)}\ket{0}-\sqrt{\frac{1}{2}\left(1+\frac{\cos\theta_m}{1+\sin\theta_m}\right)}\ket{1}.
\end{split}
\end{equation}
If the outcome labeled 0 occurs, Bob's half of $\ket{\Phi}_{AB}$ is steered to $\ket{\psi_0\left(\theta_m\right)}$. This happens with probability $\frac{1}{1+\sin\theta_m}$. If the outcome labeled 1 occurs, Bob's half of $\ket{\Phi}_{AB}$ is steered to $\ket{-}$. This happens with probability $\frac{\sin\theta_m}{1+\sin\theta_m}$.
\item $\mathcal{R}=\{{R_0},{R_1}\}$ where:
\begin{equation}
\begin{split} \label{R}
R_0&=\ketbra{r_0}{r_0}, \textrm{ with } \ket{r_0}=\sqrt{\frac{1}{2}\left(1+\frac{\cos\theta_m}{1+\sin\theta_m}\right)}\ket{0}-\sqrt{\frac{1}{2}\left(1-\frac{\cos\theta_m}{1+\sin\theta_m}\right)}\ket{1},\\
R_1&=\ketbra{r_1}{r_1}, \textrm{ with } \ket{r_1}=\sqrt{\frac{1}{2}\left(1-\frac{\cos\theta_m}{1+\sin\theta_m}\right)}\ket{0}+\sqrt{\frac{1}{2}\left(1+\frac{\cos\theta_m}{1+\sin\theta_m}\right)}\ket{1}.
\end{split}
\end{equation}
If the outcome labeled 0 occurs, Bob's half of $\ket{\Phi}_{AB}$ is steered to $\ket{\psi_1\left(\theta_m\right)}$. This happens with probability $\frac{1}{1+\sin\theta_m}$. If the outcome labeled 1 occurs, Bob's half of $\ket{\Phi}_{AB}$ is steered to $\ket{+}$. This happens with probability $\frac{\sin\theta_m}{1+\sin\theta_m}$.
\end{enumerate}
\end{claim}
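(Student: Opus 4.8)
The plan is to prove the claim by a direct computation of the post-measurement states, using the elementary fact that for a bipartite pure state a rank-one projective measurement on Alice's half steers Bob's half to the pure state obtained by projecting the $A$-register onto the corresponding measurement vector (with the outcome probability equal to the squared norm of that projection). First I would introduce the shorthand $a=\sqrt{\tfrac12(1+c)}$ and $b=\sqrt{\tfrac12(1-c)}$ with $c=\tfrac{\cos\theta_m}{1+\sin\theta_m}$, so that $\ket{\Phi}_{AB}=a\ket{00}+b\ket{11}$, $\ket{s_0}=a\ket{0}+b\ket{1}$, $\ket{s_1}=b\ket{0}-a\ket{1}$, $\ket{r_0}=a\ket{0}-b\ket{1}$, and $\ket{r_1}=b\ket{0}+a\ket{1}$. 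A one-line check gives $a^2+b^2=1$ and $\bracket{s_0}{s_1}=\bracket{r_0}{r_1}=0$, so $\mathcal S$ and $\mathcal R$ are genuine (complete, orthonormal) projective measurements.

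Next I would compute Bob's unnormalized conditional states. Applying $\bra{s_0}$ to the $A$-register of $\ket{\Phi}_{AB}$ yields $a^2\ket{0}+b^2\ket{1}$, whose squared norm $a^4+b^4$ is the outcome-$0$ probability; applying $\bra{s_1}$ yields $ab(\ket{0}-\ket{1})=\sqrt2\,ab\,\ket{-}$, with probability $2a^2b^2$. The analogous computations for $\mathcal R$ give $a^2\ket{0}-b^2\ket{1}$ with probability $a^4+b^4$, and $\sqrt2\,ab\,\ket{+}$ with probability $2a^2b^2$.

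The remaining work is trigonometric bookkeeping. Using $\cos^2\theta_m=(1-\sin\theta_m)(1+\sin\theta_m)$ one finds $c^2=\tfrac{1-\sin\theta_m}{1+\sin\theta_m}$, hence $a^4+b^4=\tfrac12(1+c^2)=\tfrac1{1+\sin\theta_m}$ and $2a^2b^2=\tfrac12(1-c^2)=\tfrac{\sin\theta_m}{1+\sin\theta_m}$, which indeed sum to $1$. For the target states I would use the half-angle identities $1+\sin\theta_m=(\cos\tfrac{\theta_m}{2}+\sin\tfrac{\theta_m}{2})^2$ and $\cos\theta_m=(\cos\tfrac{\theta_m}{2}-\sin\tfrac{\theta_m}{2})(\cos\tfrac{\theta_m}{2}+\sin\tfrac{\theta_m}{2})$, which give $c=\tfrac{\cos(\theta_m/2)-\sin(\theta_m/2)}{\cos(\theta_m/2)+\sin(\theta_m/2)}$, and therefore $a^2/b^2=(1+c)/(1-c)=\cot(\theta_m/2)$. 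Combined with normalization this shows $\tfrac{1}{\sqrt{a^4+b^4}}(a^2\ket{0}+b^2\ket{1})=\cos\tfrac{\theta_m}{2}\ket{0}+\sin\tfrac{\theta_m}{2}\ket{1}=\ket{\psi_0(\theta_m)}$ and, with the sign flipped, $\tfrac{1}{\sqrt{a^4+b^4}}(a^2\ket{0}-b^2\ket{1})=\ket{\psi_1(\theta_m)}$; the outcome-$1$ states are already proportional to $\ket{-}$ and $\ket{+}$. This completes the verification.

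I do not expect a genuine obstacle here: the content is a finite computation. The only point that needs care is fixing the signs of $a$, $b$, and $ab$ consistently (all taken nonnegative) so that the normalized conditional vectors match the stated targets exactly, not merely up to a global phase — and checking that $\mathcal S$, $\mathcal R$ are bona fide measurements, which the orthonormality check above supplies. An alternative, slightly slicker route would be to write $\ket{\Phi}_{AB}$ in its Schmidt form and invoke the standard steering/HJW correspondence directly, but the explicit projection computation above is self-contained and equally short.
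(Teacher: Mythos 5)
Your proposal is correct and follows essentially the same route as the paper: write $\ket{\Phi}_{AB}$ in Schmidt form and use the elementary steering fact that projecting Alice's register onto a measurement vector yields Bob's (unnormalized) conditional state, then verify by trigonometric identities that the normalized states are $\ket{\psi_0(\theta_m)}$, $\ket{\psi_1(\theta_m)}$, $\ket{-}$, $\ket{+}$. The only minor differences are that you derive the outcome probabilities $\tfrac{1}{1+\sin\theta_m}$ and $\tfrac{\sin\theta_m}{1+\sin\theta_m}$ explicitly (the paper cites \cite{Rudolph2004} for these), treat $\mathcal{R}$ as well as $\mathcal{S}$, and check orthonormality of the measurements — all of which make your write-up slightly more self-contained without changing the argument.
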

\begin{proof}
We give the proof for the measurement $\mathcal{S}$, the result for $\mathcal{R}$ follows similarly. The probabilities for each of the measurement outcomes can be calculated from \cite{Rudolph2004}.

Given a normalized entangled state, $a_0\ket{00}+a_1\ket{11}$, suppose Alice performs a measurement on her half and obtains the outcome associated with the projector $x_0\ket{0}+x_1\ket{1}$ with probability $p$. It is easy to see that Bob's system is steered to the (normalized) state:
\begin{equation}
\frac{1}{\sqrt{p}}\left(x_0 a_0 \ket{0}+ x_1 a_1 \ket{1}\right). 
\end{equation}
Using this together with Eq. \ref{PhiAB}, to prove Part 1 it suffices to show that:
\begin{itemize} 
\item For the projective measurement $\ket{s_0}$:
\begin{align*}
\sqrt{1+\sin{\theta_m}}\frac{1}{2}\left(1+\frac{\cos{\theta_m}}{1+\sin{\theta_m}}\right)&=\frac{1}{2}\sqrt{\frac{\left(1+\sin{\theta_m}+\cos{\theta_m}\right)^2}{1+\sin{\theta_m}}},\\
&=\frac{1}{2}\sqrt{\frac{2+2\sin{\theta_m}+2\cos{\theta_m}+2\sin{\theta_m}\cos{\theta_m}}{1+\sin{\theta_m}}},\\
&=\frac{1}{2}\sqrt{\frac{\left(1+\sin{\theta_m}\right)\left(2+2\cos{\theta_m}\right)}{1+\sin{\theta_m}}},\\
&=\cos\left(\frac{\theta_m}{2}\right),
\end{align*} 
and similarly:
\begin{align*}
\sqrt{1+\sin{\theta_m}}\frac{1}{2}\left(1-\frac{\cos{\theta_m}}{1+\sin{\theta_m}}\right)&=\frac{1}{2}\sqrt{\frac{\left(1+\sin{\theta_m}-\cos{\theta_m}\right)^2}{1+\sin{\theta_m}}},\\
&=\frac{1}{2}\sqrt{\frac{2+2\sin{\theta_m}-2\cos{\theta_m}-2\sin{\theta_m}\cos{\theta_m}}{1+\sin{\theta_m}}},\\
&=\frac{1}{2}\sqrt{\frac{\left(1+\sin{\theta_m}\right)\left(2-2\cos{\theta_m}\right)}{1+\sin{\theta_m}}},\\
&=\sin\left(\frac{\theta_m}{2}\right),
\end{align*}
so Bob is steered to $\ket{\psi_0\left(\theta_m\right)}$.
\item For the projective measurement $\ket{s_1}$:
\begin{align*}
\sqrt{\frac{1+\sin\theta_m}{\sin\theta_m}}\frac{1}{2}\sqrt{1+\frac{\cos\theta_m}{1+\sin\theta_m}}\sqrt{1-\frac{\cos\theta_m}{1+\sin\theta_m}}
&=\frac{1}{2}\sqrt{\frac{1+\sin\theta_m}{\sin\theta_m}}\sqrt{1-\frac{\cos^2\theta_m}{\left(1+\sin\theta_m\right)^2}},\\
&=\frac{1}{2}\sqrt{\frac{1+2\sin\theta_m+\sin^2\theta_m-\cos^2\theta_m}{\sin\theta_m+\sin^2\theta_m}},\\
&=\frac{1}{\sqrt{2}},\\
\end{align*}
so Bob is steered to $\ket{-}$.
\end{itemize}
\end{proof}

We now give an explicit protocol using these sets of $\ket{\Phi}_{AB}$, and based on the strategy in \cite{Rudolph2004}, that requires $\log_2{k}$ bits of classical communication:
\begin{enumerate}
\item Alice receives $\vec{x}$ from the referee.
\item For each of the $k$ sets, on the $i^{\textit{th.}}$ copy of $\ket{\Phi}_{AB}$ in that set:
\begin{enumerate}
\item If $x_i=0$, Alice measures with $\mathcal{S}=\{{S_0},{S_1}\}$. 
If the outcome labeled 0 occurs, Bob's half of $\ket{\Phi}_{AB}$ is steered to $\ket{\psi_0\left(\theta_m\right)}$. If the outcome labeled 1 occurs, Bob's half of $\ket{\Phi}_{AB}$ is steered to $\ket{-}$.
\item If $x_i=1$, Alice measures with $\mathcal{R}=\{{R_0},{R_1}\}$. 
If the outcome labeled 0 occurs, Bob's half of $\ket{\Phi}_{AB}$ is steered to $\ket{\psi_1\left(\theta_m\right)}$. If the outcome labeled 1 occurs, Bob's half of $\ket{\Phi}_{AB}$ is steered to $\ket{+}$.
\end{enumerate}
\item If there is a set in which all of the measurements resulted in the $0$ outcome, Alice sends a classical message of length $\log_2{k}$ to Bob indicating which set it was. Otherwise, in each of the $k$ sets, the measurement outcome $1$ occurs at least once so Alice aborts the game and sends a special `abort' symbol to Bob.
\item If Alice did not abort, Bob now has a set of $n$ states that he knows is in the state, $\ket{\Psi_{\vec{x}}\left(\theta_m\right)}$. He runs steps 4-6 of the protocol given in Theorem \ref{Quantum IR} on this set to output a winning answer.
\end{enumerate}

This protocol uses $\log_2{k}$ bits of communication and allows Bob to always output a winning answer when Alice does not abort. It remains to show that $k$ can be chosen to be a constant if Alice is allowed to abort with probability $\delta$.

\begin{claim}
For $m=\alpha n$, the probability that Alice aborts, $P_{\text{abort}}$, is such that:
\begin{equation}
P_{\text{abort}}\leq\left(1-4^{-\frac{1}{\alpha}}\right)^k.
\end{equation}
\end{claim}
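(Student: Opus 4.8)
\emph{Proposal.} The plan is to start from the expression for the abort probability already derived in the proof of Theorem \ref{Entanglement Assisted Theorem}, namely $P_{\text{abort}}=\left(1-\left(1+\sin\theta_m\right)^{-n}\right)^{k}$ (each set of $n$ copies fails to steer Bob to $\ket{\Psi_{\vec{x}}\left(\theta_m\right)}$ exactly when at least one of its $n$ independent measurements yields outcome $1$, which has probability $1-\left(1/(1+\sin\theta_m)\right)^{n}$, and the $k$ sets are independent). Since $k\geq1$ and the map $t\mapsto(1-t)^{k}$ is nonincreasing on $[0,1]$, it suffices to lower bound the per-set success probability $\left(1+\sin\theta_m\right)^{-n}$, equivalently to show $\left(1+\sin\theta_m\right)^{n}\leq 4^{1/\alpha}$, which using $m=\alpha n$ is the same as $\left(1+\sin\theta_m\right)^{m}\leq 4$.

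The key computation is to rewrite everything in terms of $t:=\tan\!\left(\theta_m/2\right)$. From the definition of $\theta_m$ in Eq. (\ref{PBR angle}) we have $t=2^{1/m}-1$, hence $1+t=2^{1/m}$. Using the half-angle identity $\sin\theta_m=2t/(1+t^{2})$,
\begin{equation*}
1+\sin\theta_m=\frac{1+2t+t^{2}}{1+t^{2}}=\frac{(1+t)^{2}}{1+t^{2}}\leq(1+t)^{2}=2^{2/m}=4^{1/m},
\end{equation*}
where the inequality is simply $1+t^{2}\geq1$. Raising to the $m$-th power gives $\left(1+\sin\theta_m\right)^{m}\leq4$, so $\left(1+\sin\theta_m\right)^{-n}\geq 4^{-n/m}=4^{-1/\alpha}$. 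Substituting back and using the monotonicity noted above,
\begin{equation*}
P_{\text{abort}}=\left(1-\left(1+\sin\theta_m\right)^{-n}\right)^{k}\leq\left(1-4^{-1/\alpha}\right)^{k},
\end{equation*}
as claimed.

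There is no substantial obstacle here: the only point requiring care is the trigonometric bookkeeping, specifically recognizing that $1+\sin\theta$ collapses to $(1+\tan(\theta/2))^{2}/(1+\tan^{2}(\theta/2))$, which lets the denominator be discarded for free and produces exactly the $4^{1/m}$ needed. With the claim in hand the theorem is immediate: $4^{-1/\alpha}$ is a fixed constant in $(0,1)$ independent of $n$, so there is a constant $k$ (depending on $\alpha$ and $\delta$ only) with $\left(1-4^{-1/\alpha}\right)^{k}\leq\delta$, giving abort probability at most $\delta$ for every $n$ while the communication cost is $\log_2 k=O(1)$.
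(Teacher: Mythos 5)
Your proof is correct, and it takes a genuinely different route from the paper's. The paper starts from the same expression $P_{\text{abort}}=\bigl(1-\left(1+\sin\theta_m\right)^{-n}\bigr)^{k}$, rewrites the per-set success probability as $\left(1+2^{\frac{m-2}{m}}-2^{\frac{m-1}{m}}\right)^{n}$ via $\sin\left(2\arctan x\right)=\frac{2x}{1+x^{2}}$, computes $\lim_{n\to\infty}P^{\textrm{global}}_{\textrm{steer}}=4^{-1/\alpha}$ by l'H\^opital's rule, and then invokes the (asserted, not proved) monotonic decrease of $P^{\textrm{global}}_{\textrm{steer}}$ in $n$ to conclude $P^{\textrm{global}}_{\textrm{steer}}\geq4^{-1/\alpha}$ for every $n$. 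You instead prove the uniform bound directly for each finite $n$: writing $t=\tan\left(\theta_m/2\right)=2^{1/m}-1$ and observing $1+\sin\theta_m=\frac{(1+t)^{2}}{1+t^{2}}\leq(1+t)^{2}=4^{1/m}$, you get $\left(1+\sin\theta_m\right)^{-n}\geq4^{-n/m}=4^{-1/\alpha}$ in two lines, and the monotonicity of $t\mapsto(1-t)^{k}$ finishes the claim. Your argument is the same underlying trigonometric content but buys a cleaner, fully elementary proof that sidesteps both the limit computation and the unproved monotonicity-in-$n$ step, which is the one soft spot in the paper's version; what the paper's calculation buys in exchange is the exact asymptotic value of $P^{\textrm{global}}_{\textrm{steer}}$, showing that the constant $4^{-1/\alpha}$ is tight in the large-$n$ limit, though that is not needed for the claim or for Theorem 4.
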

\begin{proof}
For each measurement, the probability that Alice obtains the outcome $0$ is given by \cite{Rudolph2004}:
\begin{equation}
P_{\textrm{steer}}=\frac{1}{1+\sin\theta_m}.
\end{equation}
The probability that all $n$ measurements in a set give outcome $0$ is:
\begin{align*}
P^{\textrm{global}}_{\textrm{steer}}&=\left(\frac{1}{1+\sin\theta_m}\right)^n,\\
&=\left(1+2^{\frac{m-2}{m}}-2^{\frac{m-1}{m}}\right)^n,
\end{align*}
where we have used Eq. (\ref{PBR angle Ap}) for $\theta_m$ and the identity $\sin\left(\arctan x\right)=\frac{2x}{1+x^2}$.

Now:
\begin{align*}
\lim_{n\to\infty}P^{\textrm{global}}_{\textrm{steer}}=&\lim_{n\to\infty}\left(1+2^{\frac{\alpha n-2}{\alpha n}}-2^{\frac{\alpha n -1}{\alpha n}}\right)^n,\\
=&\lim_{n\to\infty}\exp\left[n\ln\left[1+2^{\frac{\alpha n-2}{\alpha n}}-2^{\frac{\alpha n -1}{\alpha n}}\right]\right],\\
=&\exp\lim_{t\to0}\frac{\ln\left[1+2^{1-\frac{2t}{\alpha}}-2^{1-\frac{t}{\alpha}}\right]}{t},\\
\textrm{which, using l'Hopital's rule,}\quad=&\exp\lim_{t\to0}\frac{\frac{2}{\alpha}\ln2\left(-2^{1-\frac{2t}{\alpha}}\right)-\frac{1}{\alpha}\ln2\left(-2^{1-\frac{t}{\alpha}}\right)}{1+2^{1-\frac{2t}{\alpha}}-2^{1-\frac{t}{\alpha}}},\\
=&\exp\left[\frac{2}{\alpha}\ln2-\frac{4}{\alpha}\ln2\right],\\
=&4^{-\frac{1}{\alpha}}.
\end{align*}
and as $P^{\textrm{global}}_{\textrm{steer}}$ is monotonically decreasing in $n$, $P^{\textrm{global}}_{\textrm{steer}}\geq{4^{-\frac{1}{\alpha}}}$.

Finally, Alice aborts if each of the $k$ sets fail to steer globally so:
\begin{equation}
P_{\textrm{abort}}=\left(1-P^{\textrm{global}}_{\textrm{steer}}\right)^k\leq\left(1-4^{-\frac{1}{\alpha}}\right)^k.
\end{equation}
\end{proof}

Hence, by choosing $k$ such that $\left(1-4^{-\frac{1}{\alpha}}\right)^k\leq\delta$, Alice and Bob succeed through sending a constant amount of classical communication, regardless of the value of $n$.

\end{proof}

\end{document}